\renewcommand{\arraystretch}{.5}
\let\tilde=\widetilde
\newtheorem{theorem}{Theorem}
\newtheorem{assumption}{Assumption}
\newtheorem{lemma}{Lemma}
\newtheorem{corollary}{Corollary}
\renewcommand{\arraystretch}{1.5}
\chardef\@x10\chardef\@xv60
\def\tcitime{
\def\@time{%
  \@minute\time\@hour\@minute\divide\@hour\@xv
  \ifnum\@hour<\@x 0\fi\the\@hour:%
  \multiply\@hour\@xv\advance\@minute-\@hour
  \ifnum\@minute<\@x 0\fi\the\@minute
  }}%
\def\QCTOpt[#1]#2{%
  \def\QCTOptB{#1}
  \def\QCTOptA{#2}
}
\def\QCTNOpt#1{%
  \def\QCTOptA{#1}
  \let\QCTOptB\empty
}
\def\Qct{%
  \@ifnextchar[{%
    \QCTOpt}{\QCTNOpt}
}
\def\QCBOpt[#1]#2{%
  \def\QCBOptB{#1}
  \def\QCBOptA{#2}
}
\def\QCBNOpt#1{%
  \def\QCBOptA{#1}
  \let\QCBOptB\empty
}
\def\Qcb{%
  \@ifnextchar[{%
    \QCBOpt}{\QCBNOpt}
}
\def\PrepCapArgs{%
  \ifx\QCBOptA\empty
    \ifx\QCTOptA\empty
      {}%
    \else
      \ifx\QCTOptB\empty
        {\QCTOptA}%
      \else
        [\QCTOptB]{\QCTOptA}%
      \fi
    \fi
  \else
    \ifx\QCBOptA\empty
      {}%
    \else
      \ifx\QCBOptB\empty
        {\QCBOptA}%
      \else
        [\QCBOptB]{\QCBOptA}%
      \fi
    \fi
  \fi
}
\def\GRAPHICSPS#1{%
 \ifcase\GRAPHICSTYPE
   \special{ps: #1}%
 \or
   \special{language "PS", include "#1"}%
 \fi
}%
\def\graffile#1#2#3#4{%
    \leavevmode
    \raise -#4 \BOXTHEFRAME{%
        \hbox to #2{\raise #3\hbox to #2{\null #1\hfil}}}%
}%
\def\draftbox#1#2#3#4{%
 \leavevmode\raise -#4 \hbox{%
  \frame{\rlap{\protect\tiny #1}\hbox to #2%
   {\vrule height#3 width\z@ depth\z@\hfil}%
  }%
 }%
}%
\newif\ifwasdraft
\def\GRAPHIC#1#2#3#4#5{%
 \ifnum\draft=\@ne\draftbox{#2}{#3}{#4}{#5}%
  \else\graffile{#1}{#3}{#4}{#5}%
  \fi
 }%
\def\addtoLaTeXparams#1{%
    \edef\LaTeXparams{\LaTeXparams #1}}%
\newif\ifBoxFrame \BoxFramefalse
\newif\ifOverFrame \OverFramefalse
\newif\ifUnderFrame \UnderFramefalse
\def\BOXTHEFRAME#1{%
   \hbox{%
      \ifBoxFrame
         \frame{#1}%
      \else
         {#1}%
      \fi
   }%
}
\def\doFRAMEparams#1{\BoxFramefalse\OverFramefalse\UnderFramefalse\readFRAMEparams#1\end}%
\def\readFRAMEparams#1{%
 \ifx#1\end%
  \let\next=\relax
  \else
  \ifx#1i\dispkind=\z@\fi
  \ifx#1d\dispkind=\@ne\fi
  \ifx#1f\dispkind=\tw@\fi
  \ifx#1t\addtoLaTeXparams{t}\fi
  \ifx#1b\addtoLaTeXparams{b}\fi
  \ifx#1p\addtoLaTeXparams{p}\fi
  \ifx#1h\addtoLaTeXparams{h}\fi
  \ifx#1X\BoxFrametrue\fi
  \ifx#1O\OverFrametrue\fi
  \ifx#1U\UnderFrametrue\fi
  \ifx#1w
    \ifnum\draft=1\wasdrafttrue\else\wasdraftfalse\fi
    \draft=\@ne
  \fi
  \let\next=\readFRAMEparams
  \fi
 \next
 }%
\def\IFRAME#1#2#3#4#5#6{%
      \bgroup
      \let\QCTOptA\empty
      \let\QCTOptB\empty
      \let\QCBOptA\empty
      \let\QCBOptB\empty
      #6%
      \parindent=0pt%
      \leftskip=0pt
      \rightskip=0pt
      \setbox0 = \hbox{\QCBOptA}%
      \@tempdima = #1\relax
      \ifOverFrame
          \typeout{This is not implemented yet}%
          \show\HELP
      \else
         \ifdim\wd0>\@tempdima
            \advance\@tempdima by \@tempdima
            \ifdim\wd0 >\@tempdima
               \textwidth=\@tempdima
               \setbox1 =\vbox{%
                  \noindent\hbox to \@tempdima{\hfill\GRAPHIC{#5}{#4}{#1}{#2}{#3}\hfill}\\%
                  \noindent\hbox to \@tempdima{\parbox[b]{\@tempdima}{\QCBOptA}}%
               }%
               \wd1=\@tempdima
            \else
               \textwidth=\wd0
               \setbox1 =\vbox{%
                 \noindent\hbox to \wd0{\hfill\GRAPHIC{#5}{#4}{#1}{#2}{#3}\hfill}\\%
                 \noindent\hbox{\QCBOptA}%
               }%
               \wd1=\wd0
            \fi
         \else
            \ifdim\wd0>0pt
              \hsize=\@tempdima
              \setbox1 =\vbox{%
                \unskip\GRAPHIC{#5}{#4}{#1}{#2}{0pt}%
                \break
                \unskip\hbox to \@tempdima{\hfill \QCBOptA\hfill}%
              }%
              \wd1=\@tempdima
           \else
              \hsize=\@tempdima
              \setbox1 =\vbox{%
                \unskip\GRAPHIC{#5}{#4}{#1}{#2}{0pt}%
              }%
              \wd1=\@tempdima
           \fi
         \fi
         \@tempdimb=\ht1
         \advance\@tempdimb by \dp1
         \advance\@tempdimb by -#2%
         \advance\@tempdimb by #3%
         \leavevmode
         \raise -\@tempdimb \hbox{\box1}%
      \fi
      \egroup%
}%
\def\DFRAME#1#2#3#4#5{%
 \begin{center}
     \let\QCTOptA\empty
     \let\QCTOptB\empty
     \let\QCBOptA\empty
     \let\QCBOptB\empty
     \ifOverFrame 
        #5\QCTOptA\par
     \fi
     \GRAPHIC{#4}{#3}{#1}{#2}{\z@}
     \ifUnderFrame 
        \nobreak\par #5\QCBOptA
     \fi
 \end{center}%
 }%
\def\FFRAME#1#2#3#4#5#6#7{%
 \begin{figure}[#1]%
  \let\QCTOptA\empty
  \let\QCTOptB\empty
  \let\QCBOptA\empty
  \let\QCBOptB\empty
  \ifOverFrame
    #4
    \ifx\QCTOptA\empty
    \else
      \ifx\QCTOptB\empty
        \caption{\QCTOptA}%
      \else
        \caption[\QCTOptB]{\QCTOptA}%
      \fi
    \fi
    \ifUnderFrame\else
      \label{#5}%
    \fi
  \else
    \UnderFrametrue%
  \fi
  \begin{center}\GRAPHIC{#7}{#6}{#2}{#3}{\z@}\end{center}%
  \ifUnderFrame
    #4
    \ifx\QCBOptA\empty
      \caption{}%
    \else
      \ifx\QCBOptB\empty
        \caption{\QCBOptA}%
      \else
        \caption[\QCBOptB]{\QCBOptA}%
      \fi
    \fi
    \label{#5}%
  \fi
  \end{figure}%
 }%
\def\makeactives{
  \catcode`\"=\active
  \catcode`\;=\active
  \catcode`\:=\active
  \catcode`\'=\active
  \catcode`\~=\active
}
   \gdef\activesoff{%
      \def"{\string"}
      \def;{\string;}
      \def:{\string:}
      \def'{\string'}
      \def~{\string~}
    }
\def\FRAME#1#2#3#4#5#6#7#8{%
 \bgroup
 \@ifundefined{bbl@deactivate}{}{\activesoff}
 \ifnum\draft=\@ne
   \wasdrafttrue
 \else
   \wasdraftfalse%
 \fi
 \def\LaTeXparams{}%
 \dispkind=\z@
 \def\LaTeXparams{}%
 \doFRAMEparams{#1}%
 \ifnum\dispkind=\z@\IFRAME{#2}{#3}{#4}{#7}{#8}{#5}\else
  \ifnum\dispkind=\@ne\DFRAME{#2}{#3}{#7}{#8}{#5}\else
   \ifnum\dispkind=\tw@
    \edef\@tempa{\noexpand\FFRAME{\LaTeXparams}}%
    \@tempa{#2}{#3}{#5}{#6}{#7}{#8}%
    \fi
   \fi
  \fi
  \ifwasdraft\draft=1\else\draft=0\fi{}%
  \egroup
 }%
\def\TEXUX#1{"texux"}
\long\def\QQQ#1#2{%
     \long\expandafter\def\csname#1\endcsname{#2}}%
\long\def\QQA#1#2{}%
\def\QTR#1#2{{\csname#1\endcsname #2}}
\def\EXPAND#1[#2]#3{}%
\def\NOEXPAND#1[#2]#3{}%
\def\LaTeXparent#1{}%
\def\ChildStyles#1{}%
\def\ChildDefaults#1{}%
\def\QTagDef#1#2#3{}%
\def\QQfnmark#1{\footnotemark}
\def\makeatletter\input gnuindex.sty\makeatother\makeindex{\makeatletter\input gnuindex.sty\makeatother\makeindex}%
\def\initial#1{\bigbreak{\raggedright\large\bf #1}\kern 2\p@\penalty3000}}%
 \def\abstract{%
  \if@twocolumn
   \section*{Abstract (Not appropriate in this style!)}%
   \else \small 
   \begin{center}{\bf Abstract\vspace{-.5em}\vspace{\z@}}\end{center}%
   \quotation 
   \fi
  }%
   \def\registered{\relax\ifmmode{}\r@gistered
                    \else$\m@th\r@gistered$\fi}%
 \def\r@gistered{^{\ooalign
  {\hfil\raise.07ex\hbox{$\scriptstyle\rm\text{R}$}\hfil\crcr
  \mathhexbox20D}}}}{}%
\newdimen\theight
\def\Column{%
 \vadjust{\setbox\z@=\hbox{\scriptsize\quad\quad tcol}%
  \theight=\ht\z@\advance\theight by \dp\z@\advance\theight by \lineskip
  \kern -\theight \vbox to \theight{%
   \rightline{\rlap{\box\z@}}%
   \vss
   }%
  }%
 }%
\def\qed{%
 \ifhmode\unskip\nobreak\fi\ifmmode\ifinner\else\hskip5\p@\fi\fi
 \hbox{\hskip5\p@\vrule width4\p@ height6\p@ depth1.5\p@\hskip\p@}%
 }%
\def\miss{\hbox{\vrule height2\p@ width 2\p@ depth\z@}}%
\def\tcol#1{{\baselineskip=6\p@ \vcenter{#1}} \Column}  %
\def\newfmtname{LaTeX2e}
\def\chkcompat{%
   \if@compatibility
   \else
     \usepackage{latexsym}
   \fi
}
  \DeclareOldFontCommand{\rm}{\normalfont\rmfamily}{\mathrm}
  \DeclareOldFontCommand{\sf}{\normalfont\sffamily}{\mathsf}
  \DeclareOldFontCommand{\tt}{\normalfont\ttfamily}{\mathtt}
  \DeclareOldFontCommand{\bf}{\normalfont\bfseries}{\mathbf}
  \DeclareOldFontCommand{\it}{\normalfont\itshape}{\mathit}
  \DeclareOldFontCommand{\sl}{\normalfont\slshape}{\@nomath\sl}
  \DeclareOldFontCommand{\sc}{\normalfont\scshape}{\@nomath\sc}
\def\alpha{\Greekmath 010B }%
\def\beta{\Greekmath 010C }%
\def\gamma{\Greekmath 010D }%
\def\delta{\Greekmath 010E }%
\def\epsilon{\Greekmath 010F }%
\def\zeta{\Greekmath 0110 }%
\def\eta{\Greekmath 0111 }%
\def\theta{\Greekmath 0112 }%
\def\iota{\Greekmath 0113 }%
\def\kappa{\Greekmath 0114 }%
\def\lambda{\Greekmath 0115 }%
\def\mu{\Greekmath 0116 }%
\def\nu{\Greekmath 0117 }%
\def\xi{\Greekmath 0118 }%
\def\pi{\Greekmath 0119 }%
\def\rho{\Greekmath 011A }%
\def\sigma{\Greekmath 011B }%
\def\tau{\Greekmath 011C }%
\def\upsilon{\Greekmath 011D }%
\def\phi{\Greekmath 011E }%
\def\chi{\Greekmath 011F }%
\def\psi{\Greekmath 0120 }%
\def\omega{\Greekmath 0121 }%
\def\varepsilon{\Greekmath 0122 }%
\def\vartheta{\Greekmath 0123 }%
\def\varpi{\Greekmath 0124 }%
\def\varrho{\Greekmath 0125 }%
\def\varsigma{\Greekmath 0126 }%
\def\varphi{\Greekmath 0127 }%
\def\nabla{\Greekmath 0272 }
\def\FindBoldGroup{%
   {\setbox0=\hbox{$\mathbf{x\global\edef\theboldgroup{\the\mathgroup}}$}}%
}
\def\Greekmath#1#2#3#4{%
    \if@compatibility
        \ifnum\mathgroup=\symbold
           \mathchoice{\mbox{\boldmath$\displaystyle\mathchar"#1#2#3#4$}}%
                      {\mbox{\boldmath$\textstyle\mathchar"#1#2#3#4$}}%
                      {\mbox{\boldmath$\scriptstyle\mathchar"#1#2#3#4$}}%
                      {\mbox{\boldmath$\scriptscriptstyle\mathchar"#1#2#3#4$}}%
        \else
           \mathchar"#1#2#3#4%
        \fi 
    \else 
        \FindBoldGroup
        \ifnum\mathgroup=\theboldgroup 
           \mathchoice{\mbox{\boldmath$\displaystyle\mathchar"#1#2#3#4$}}%
                      {\mbox{\boldmath$\textstyle\mathchar"#1#2#3#4$}}%
                      {\mbox{\boldmath$\scriptstyle\mathchar"#1#2#3#4$}}%
                      {\mbox{\boldmath$\scriptscriptstyle\mathchar"#1#2#3#4$}}%
        \else
           \mathchar"#1#2#3#4%
        \fi     	    
	  \fi}
\newif\ifGreekBold  \GreekBoldfalse
\let\SAVEPBF=\pbf
\def\pbf{\GreekBoldtrue\SAVEPBF}%
  \newcounter{equationnumber}  
  \def\mathletters{%
     \addtocounter{equation}{1}
     \edef\@currentlabel{\theequation}%
     \setcounter{equationnumber}{\c@equation}
     \setcounter{equation}{0}%
     \edef\theequation{\@currentlabel\noexpand\alph{equation}}%
  }
    \def\BibTeX{{\rm B\kern-.05em{\sc i\kern-.025em b}\kern-.08em
                 T\kern-.1667em\lower.7ex\hbox{E}\kern-.125emX}}}{}%
\def\AmS{{\protect\usefont{OMS}{cmsy}{m}{n}%
                A\kern-.1667em\lower.5ex\hbox{M}\kern-.125emS}}}{}%
\let\DOTSI\relax
\def\RIfM@{\relax\ifmmode}%
\def\FN@{\futurelet\next}%
\def\iint{\DOTSI\intno@\tw@\FN@\ints@}%
\def\iiint{\DOTSI\intno@\thr@@\FN@\ints@}%
\def\iiiint{\DOTSI\intno@4 \FN@\ints@}%
\def\idotsint{\DOTSI\intno@\z@\FN@\ints@}%
\def\ints@{\findlimits@\ints@@}%
\newif\iflimtoken@
\newif\iflimits@
\def\findlimits@{\limtoken@true\ifx\next\limits\limits@true
 \else\ifx\next\nolimits\limits@false\else
 \limtoken@false\ifx\ilimits@\nolimits\limits@false\else
 \ifinner\limits@false\else\limits@true\fi\fi\fi\fi}%
\def\multint@{\int\ifnum\intno@=\z@\intdots@                          
 \else\intkern@\fi                                                    
 \ifnum\intno@>\tw@\int\intkern@\fi                                   
 \ifnum\intno@>\thr@@\int\intkern@\fi                                 
 \int}
\def\multintlimits@{\intop\ifnum\intno@=\z@\intdots@\else\intkern@\fi
 \ifnum\intno@>\tw@\intop\intkern@\fi
 \ifnum\intno@>\thr@@\intop\intkern@\fi\intop}%
\def\intic@{%
    \mathchoice{\hskip.5em}{\hskip.4em}{\hskip.4em}{\hskip.4em}}%
\def\negintic@{\mathchoice
 {\hskip-.5em}{\hskip-.4em}{\hskip-.4em}{\hskip-.4em}}%
\def\ints@@{\iflimtoken@                                              
 \def\ints@@@{\iflimits@\negintic@
   \mathop{\intic@\multintlimits@}\limits                             
  \else\multint@\nolimits\fi                                          
  \eat@}
 \else                                                                
 \def\ints@@@{\iflimits@\negintic@
  \mathop{\intic@\multintlimits@}\limits\else
  \multint@\nolimits\fi}\fi\ints@@@}%
\def\intkern@{\mathchoice{\!\!\!}{\!\!}{\!\!}{\!\!}}%
\def\plaincdots@{\mathinner{\cdotp\cdotp\cdotp}}%
\def\intdots@{\mathchoice{\plaincdots@}%
 {{\cdotp}\mkern1.5mu{\cdotp}\mkern1.5mu{\cdotp}}%
 {{\cdotp}\mkern1mu{\cdotp}\mkern1mu{\cdotp}}%
 {{\cdotp}\mkern1mu{\cdotp}\mkern1mu{\cdotp}}}%
\def\RIfM@{\relax\protect\ifmmode}
\def\text{\RIfM@\expandafter\text@\else\expandafter\mbox\fi}
\let\nfss@text\text
\def\text@#1{\mathchoice
   {\textdef@\displaystyle\f@size{#1}}%
   {\textdef@\textstyle\tf@size{\firstchoice@false #1}}%
   {\textdef@\textstyle\sf@size{\firstchoice@false #1}}%
   {\textdef@\textstyle \ssf@size{\firstchoice@false #1}}%
   \glb@settings}
\def\textdef@#1#2#3{\hbox{{%
                    \everymath{#1}%
                    \let\f@size#2\selectfont
                    #3}}}
\newif\iffirstchoice@
\def\Let@{\relax\iffalse{\fi\let\\=\cr\iffalse}\fi}%
\def\vspace@{\def\vspace##1{\crcr\noalign{\vskip##1\relax}}}%
\def\multilimits@{\bgroup\vspace@\Let@
 \baselineskip\fontdimen10 \scriptfont\tw@
 \advance\baselineskip\fontdimen12 \scriptfont\tw@
 \lineskip\thr@@\fontdimen8 \scriptfont\thr@@
 \lineskiplimit\lineskip
 \vbox\bgroup\ialign\bgroup\hfil$\m@th\scriptstyle{##}$\hfil\crcr}%
\def\Sb{_\multilimits@}%
\def\endSb{\crcr\egroup\egroup\egroup}%
\def\Sp{^\multilimits@}%
\newdimen\ex@
\def\rightarrowfill@#1{$#1\m@th\mathord-\mkern-6mu\cleaders
 \hbox{$#1\mkern-2mu\mathord-\mkern-2mu$}\hfill
 \mkern-6mu\mathord\rightarrow$}%
\def\leftarrowfill@#1{$#1\m@th\mathord\leftarrow\mkern-6mu\cleaders
 \hbox{$#1\mkern-2mu\mathord-\mkern-2mu$}\hfill\mkern-6mu\mathord-$}%
\def\leftrightarrowfill@#1{$#1\m@th\mathord\leftarrow
\mkern-6mu\cleaders
 \hbox{$#1\mkern-2mu\mathord-\mkern-2mu$}\hfill
 \mkern-6mu\mathord\rightarrow$}%
\def\overrightarrow{\mathpalette\overrightarrow@}%
\def\overrightarrow@#1#2{\vbox{\ialign{##\crcr\rightarrowfill@#1\crcr
 \noalign{\kern-\ex@\nointerlineskip}$\m@th\hfil#1#2\hfil$\crcr}}}%
\def\overleftarrow{\mathpalette\overleftarrow@}%
\def\overleftarrow@#1#2{\vbox{\ialign{##\crcr\leftarrowfill@#1\crcr
 \noalign{\kern-\ex@\nointerlineskip}$\m@th\hfil#1#2\hfil$\crcr}}}%
\def\overleftrightarrow{\mathpalette\overleftrightarrow@}%
\def\overleftrightarrow@#1#2{\vbox{\ialign{##\crcr
   \leftrightarrowfill@#1\crcr
 \noalign{\kern-\ex@\nointerlineskip}$\m@th\hfil#1#2\hfil$\crcr}}}%
\def\underrightarrow{\mathpalette\underrightarrow@}%
\def\underrightarrow@#1#2{\vtop{\ialign{##\crcr$\m@th\hfil#1#2\hfil
  $\crcr\noalign{\nointerlineskip}\rightarrowfill@#1\crcr}}}%
\def\underleftarrow{\mathpalette\underleftarrow@}%
\def\underleftarrow@#1#2{\vtop{\ialign{##\crcr$\m@th\hfil#1#2\hfil
  $\crcr\noalign{\nointerlineskip}\leftarrowfill@#1\crcr}}}%
\def\underleftrightarrow{\mathpalette\underleftrightarrow@}%
\def\underleftrightarrow@#1#2{\vtop{\ialign{##\crcr$\m@th
  \hfil#1#2\hfil$\crcr
 \noalign{\nointerlineskip}\leftrightarrowfill@#1\crcr}}}%
\def\qopnamewl@#1{\mathop{\operator@font#1}\nlimits@}
\let\nlimits@\displaylimits
\def\setboxz@h{\setbox\z@\hbox}
\def\varlim@#1#2{\mathop{\vtop{\ialign{##\crcr
 \hfil$#1\m@th\operator@font lim$\hfil\crcr
 \noalign{\nointerlineskip}#2#1\crcr
 \noalign{\nointerlineskip\kern-\ex@}\crcr}}}}
 \def\rightarrowfill@#1{\m@th\setboxz@h{$#1-$}\ht\z@\z@
  $#1\copy\z@\mkern-6mu\cleaders
  \hbox{$#1\mkern-2mu\box\z@\mkern-2mu$}\hfill
  \mkern-6mu\mathord\rightarrow$}
\def\leftarrowfill@#1{\m@th\setboxz@h{$#1-$}\ht\z@\z@
  $#1\mathord\leftarrow\mkern-6mu\cleaders
  \hbox{$#1\mkern-2mu\copy\z@\mkern-2mu$}\hfill
  \mkern-6mu\box\z@$}
\def\projlim{\qopnamewl@{proj\,lim}}
\def\injlim{\qopnamewl@{inj\,lim}}
\def\varinjlim{\mathpalette\varlim@\rightarrowfill@}
\def\varprojlim{\mathpalette\varlim@\leftarrowfill@}
\def\varliminf{\mathpalette\varliminf@{}}
\def\varliminf@#1{\mathop{\underline{\vrule\@depth.2\ex@\@width\z@
   \hbox{$#1\m@th\operator@font lim$}}}}
\def\varlimsup{\mathpalette\varlimsup@{}}
\def\varlimsup@#1{\mathop{\overline
  {\hbox{$#1\m@th\operator@font lim$}}}}
\def\align{\@verbatim \frenchspacing\@vobeyspaces \@alignverbatim
You are using the "align" environment in a style in which it is not defined.}
\let\csname endalign*\endcsname =\endtrivlist
\def\alignat{\@verbatim \frenchspacing\@vobeyspaces \@alignatverbatim
You are using the "alignat" environment in a style in which it is not defined.}
\let\csname endalignat*\endcsname =\endtrivlist
\def\xalignat{\@verbatim \frenchspacing\@vobeyspaces \@xalignatverbatim
You are using the "xalignat" environment in a style in which it is not defined.}
\let\csname endxalignat*\endcsname =\endtrivlist
\def\gather{\@verbatim \frenchspacing\@vobeyspaces \@gatherverbatim
You are using the "gather" environment in a style in which it is not defined.}
\let\csname endgather*\endcsname =\endtrivlist
\def\multiline{\@verbatim \frenchspacing\@vobeyspaces \@multilineverbatim
You are using the "multiline" environment in a style in which it is not defined.}
\let\csname endmultiline*\endcsname =\endtrivlist
\def\arrax{\@verbatim \frenchspacing\@vobeyspaces \@arraxverbatim
You are using a type of "array" construct that is only allowed in AmS-LaTeX.}
\def\tabulax{\@verbatim \frenchspacing\@vobeyspaces \@tabulaxverbatim
You are using a type of "tabular" construct that is only allowed in AmS-LaTeX.}
\let\csname endarrax*\endcsname =\endtrivlist
\let\csname endtabulax*\endcsname =\endtrivlist
\def\@@eqncr{\let\@tempa\relax
    \ifcase\@eqcnt \def\@tempa{& & &}\or \def\@tempa{& &}%
      \else \def\@tempa{&}\fi
     \@tempa
     \if@eqnsw
        \iftag@
           \@taggnum
        \else
           \@eqnnum\stepcounter{equation}%
        \fi
     \fi
     \global\tag@false
     \global\@eqnswtrue
     \global\@eqcnt\z@\cr}
 \def\endequation{%
     \ifmmode\ifinner 
      \iftag@
        \addtocounter{equation}{-1} 
        $\hfil
           \displaywidth\linewidth\@taggnum\egroup \endtrivlist
        \global\tag@false
        \global\@ignoretrue   
      \else
        $\hfil
           \displaywidth\linewidth\@eqnnum\egroup \endtrivlist
        \global\tag@false
        \global\@ignoretrue 
      \fi
     \else   
      \iftag@
        \addtocounter{equation}{-1} 
        \eqno \hbox{\@taggnum}
        \global\tag@false%
        $$\global\@ignoretrue
      \else
        \eqno \hbox{\@eqnnum}
        $$\global\@ignoretrue
      \fi
     \fi\fi
 } 
 \newif\iftag@ \tag@false
 \def\tag{\@ifnextchar*{\@tagstar}{\@tag}}
 \def\@tag#1{%
     \global\tag@true
     \global\def\@taggnum{(#1)}}
 \def\@tagstar*#1{%
     \global\tag@true
     \global\def\@taggnum{#1}%
}
\def\@biblabel#1{\hspace*{-\labelsep}}
\title{Hermite Polynomial-based Valuation of American Options with General Jump-Diffusion Processes\thanks{We are grateful for extensive discussions with Jerome Detemple, Iván Fernández-Val, Jean-Jacques
Forneron, Hiroaki Kaido, Pierre Perron, Zhongjun Qu, and Hao Xing. We would also like to thank Undral Byambadalai, Shuowen Chen, Taosong Deng, Anlong Qin and seminar participants at Boston University for their comments. \textit{Matlab} code to implement the numerical examples in this paper 
can be found at https://sites.google.com/view/guang-zhang/research
}}
\author{Li Chen\thanks{%
Email: lichencharlie@gmail.com. }} 
\affil{Questrom School of Business, Boston University, Boston, MA, 02215}
\author{Guang Zhang\thanks{%
Email: gzhang46@bu.edu. }} 
\affil{Department of Economics, Boston University, Boston, MA, 02215}
\date{
\today
}
\begin{document}

\maketitle

\begin{abstract}
\baselineskip=17pt 

We present a new approximation scheme for the price and exercise policy of American options. The scheme is based on Hermite polynomial expansions of the transition density of the underlying asset dynamics and the early exercise premium representation of the American option price. The advantages of the proposed approach are threefold. First, our approach does not require the transition density and characteristic functions of the underlying asset dynamics to be attainable in closed form. Second, our approach is fast and accurate, while the prices and exercise policy can be jointly produced. Third, our approach has a wide range of applications. We show that the proposed approximations of the price and optimal exercise boundary converge to the true ones. We also provide a numerical method based on a step function to implement our proposed approach. Applications to nonlinear mean-reverting models, double mean-reverting models, Merton's and Kou’s jump-diffusion models are presented and discussed.

\noindent \textbf{Keywords: }Hermite polynomials, American option, early exercise premium, optimal exercise boundary

\noindent \textbf{JEL codes}: C22, C41, G12, G13.
\end{abstract}
\thispagestyle{empty}\setcounter{page}{0}\baselineskip=18pt\newpage

\section{Introduction}
The valuation of American-style options poses a challenge for both academic and industrial professionals. One of the difficulties comes from the fact that such a valuation process relies on the identification of an optimal exercise policy. So far, considerable effort has been put into simple settings where the underlying asset price follows a log-normal process and the interest rate is constant (i.e., the standard model, or the Black-Scholes model). Within this context, Kim (1990) decomposed the American option price into two parts: the corresponding European option price and an Early Exercise Premium (EEP) that captures the gains from exercising the option prior to its maturity. Similar results are provided by Jacka (1991) and Carr \emph{et al.} (1992). The EEP representation of the American option price has proved extremely useful because it provides a recursive integral equation for the optimal exercise boundary. Solving the integral equations is key to the valuation process: it identifies the optimal exercise policy, providing a parametric formula for the option price. Such an approach, based on the integral equation, is straightforward to implement and shows significant advantages over other numerical procedures such as methods based on binomial lattices, Monte Carlo simulation, and Partial Differential Equations (PDE). See Brodie and Detemple (2004) for a survey of methods on the valuation of American options.

While the valuation of American options in the standard model has been resolved, empirical evidence suggests that the log-normality assumption does not hold in reality. For example, the “volatility smile” phenomenon is a well-known pattern in option pricing practice. To allow for the consistency of models with empirical regularities, non-constant, or even non-deterministic model parameters should be considered. Unfortunately, analytical results in the standard model can not be generalized to models with stochastic parameters in a straightforward manner. Efforts have been made to solve diffusion models with nonconstant parameters. For example, Jacka and Lynn (1992) considered general contingent claims written on diffusion processes. 
Detemple and Tian (2002) presented an integral equation approach for the valuation of American-style derivatives when the underlying asset price follows a general diffusion process and the interest rate is stochastic. 
See Rutkowski (1994), and Gukhal (2001) for the valuation of American options for other non-standard models.

All the above-mentioned methods rely on the fact that the transition density of the underlying asset dynamics admits a closed functional form. Such conditions have limited the scope of stochastic processes that can be considered. Furthermore, even when the transition density exists in closed form, the structure may be quite complex, and in turn, the method may be difficult to implement. To overcome these difficulties, this article presents a systematic treatment of the valuation of American options based on Hermite polynomial expansions and the EEP formula. Therefore, our contributions to the literature are threefold.

The first contribution is that our method does not rely on the existence of analytical solutions to the transition density or the characteristic function of the distribution of the underlying asset price. Moreover, there are no requirements for affine structures. We propose using Hermite polynomials to approximate the transition density for a given jump-diffusion model. The Hermite polynomial approximation is based on Ait-Sahalia (2002, 2008), and Yu (2007). This approach gives an explicit sequence of closed-form solutions to the transition density and is shown to converge to the turn density. See Ait-Sahalia (1999), Egorov, Li, and Xu (2003), Ait-Sahalia and Kimmel (2007, 2010), and Xiu (2014) for studies related to this approach.

The second contribution is that, our method is fast and accurate, while the price and exercise policy can be jointly approximated by our approximation scheme. Owing to the inherent nature of the EEP approach, by solving the integral equations with the Hermite polynomial-based approximation to the transition density, we can generate an approximation of the optimal exercise boundary. In turn, we provide a theorem (Theorem \ref{thm:aoh} in Subsection \ref{subsection:hp}) on the convergence of our proposed approximation. When we increase the order of the Hermite polynomial in the approximation of the transition density, the proposed approximations of the price and exercise boundary of American options further improve. We can control the smoothness and accuracy of the exercise boundary by changing the polynomial order in the expansion of the transition density.

Third, our method can be easily extended to jump-diffusion models and multidimensional cases. The extension is straightforward to implement without additional theoretical/modeling complications. Kou (2002) established the analytical solutions for European option pricing in a jump-diffusion model. However, the American option pricing with jump-diffusion processes remains challenging. Gukhal (2001) derived an EEP formula for the value of American options in a jump-diffusion model. Despite all these efforts, one major drawback is that jump-diffusion models usually come without closed-form transition densities. Even if such a density exists, its functional form may be quite complicated in structure and the implementation requires a significant amount of human and computer power. Our method, on the other hand, can overcome these difficulties. By using a Hermite polynomial expansion, we can control the computational cost of the pricing algorithm by specifying the order of the expansion. Moreover, unlike conventional approaches such as the PDE-based method (finite difference, for example), Hermite polynomial expansions can be applied to a vector of stochastic processes, and the results can be directly applied to multi-dimensional models.

The structure of this paper is as follows. Section \ref{section:ao_2} describes an approach to the American option valuation when the underlying asset prices follow a general diffusion process. Section \ref{section:ao_3} describes the generalization of the method to jump-diffusion processes. Section \ref{section:ao_4} presents a numerical algorithm for implementing the proposed approach. Section \ref{section:ao_5} provides several examples to demonstrate the efficiency of the proposed approach. Finally, Section \ref{section:ao_6} concludes this paper.

\section{Valuation of American Options in Diffusion Models}\label{section:ao_2}

\subsection{American Options}

We consider the stock price $S$ defined on a probability space $\left(\varOmega,\mathcal{F},\mathbb{P}^{*}\right)$
with filtration $\mathbb{F}=\left(\mathcal{F}_{t}\right)_{0\leq t\leq T}$
satisfying the usual conditions and following:
\begin{equation}
dS_{t}=\left(r\left(S_{t};\theta\right)-\delta\left(S_{t};\theta\right)\right)dt+\sigma\left(S_{t};\theta\right)dW_{t}\label{eq:aomodel}.
\end{equation}
We also denote $\mu\left(S_{t};\theta\right)=r\left(S_{t};\theta\right)-\delta\left(S_{t};\theta\right)$.
Let $D_{S}=\left(\underline{s},\bar{s}\right)$ be the domain of the
diffusion $S$.

The arbitrage-free price of an American put option with a finite expiration $T>0$ and a strike price $K$ can be expressed as the expected
value of its discounted payoff:
\begin{equation}
P\left(t,S_{t}\right)=\sup_{t\leq\tau\leq T}\mathbb{E}^{*}\left[e^{-\left(\tau-t\right)r}\left(K-S_{\tau}\right)^{+}|S_{t}\right]\label{eq:aovalue}
\end{equation}
under the risk-neutral probability measure $\mathbb{P}^{*}$. Here
$\tau$ is the stopping time.

\subsection{Early Exercise Boundary}

Let $\mathcal{B}=\left\{ B_{t}:B_{t}\geq0,t\in\left[0,T\right]\right\} $
denote the optimal early exercise boundary of the American put option.
Then the arbitrage-free price of the American put option, $P\left(t,S_{t}\right)$, solves
the following free boundary problem:
\[
\mathcal{L}P=0,
\]
\[
P\left(T,S_{T}\right)=\left(K-S_{T}\right)^{+},
\]
\[
\lim_{S_{t}\uparrow\infty}P\left(t,S_{t}\right)=0,
\]
\[
\lim_{S_{t}\downarrow B_{t}}P\left(t,S_{t}\right)=K-B_{t},
\]
\[
\lim_{S_{t}\downarrow B_{t}}\frac{\partial P\left(t,S_{t}\right)}{\partial S_{t}}=-1,
\]
where $\mathcal{L}f=\frac{1}{2}\sigma^{2}S_{t}^{2}\frac{\partial^{2}f}{\partial S_{t}^{2}}+\left(r-\delta\right)S_{t}\frac{\partial f}{\partial S_{t}}-rf+\frac{\partial f}{\partial t}.$

\begin{theorem}[Exercise Premium Representation]\label{thm:eep}
We assume that $r$, $\delta$,
and $\sigma$ are continuously differentiable, and (\ref{eq:aomodel})
has a unique strong solution. Then, in the continuation region $\mathcal{C}$\footnote{The continuation region is the set of pairs $(S,t)$ at which immediate exercise is sub-optimal.},
the value of the American put option, $P_{0}\equiv P\left(0,S_{0}=s_{0}\right)$,
has the following early exercise premium representation:
\begin{equation}
P_{0}=p_{0}+e_{0}\label{eq:ao}
\end{equation}
where $p_{0}$ represents the price of a European put option, that is,
\begin{equation}
p_{0}\equiv p\left(0,S_{0}=s_{0}\right)=\int_{0}^{K}e^{-rT}\left(K-S_{T}\right)\psi\left(S_{T};S_{0}=s_{0}\right)dS_{T}\label{eq:ao_eo},
\end{equation}
and $e_{0}$ is the early exercise premium given by
\begin{equation}
e_{0}\equiv e\left(0,S_{0}=s_{0},B\left(\cdot\right)\right)=\int_{0}^{T}\int_{0}^{B_{t}}\left(rK-\delta S_{t}\right)e^{-rt}\psi\left(S_{t};S_{0}=s_{0}\right)dS_{t}dt\label{eq:ao_eep},
\end{equation}
and $\psi\left(S_{t};S_{0}=s_{0}\right)$ denotes the risk-neutral
transitional density function of $S_{t}$, given $S_{0}=s_{0}$. The
exercise boundary $B_{t}$ solves the recursive nonlinear integral
equation
\begin{equation}
K-B_{t}=p\left(t,B_{t}\right)+e\left(t,B_{t},B\left(\cdot\right)\right)\quad\forall t\in\left[0,T\right),\label{eq:ao_bdy}
\end{equation}
subject to the boundary condition
\[
B_{T-}\equiv\lim_{t\uparrow T}B_{t}=\min\left\{ K,\frac{r\left(B_{T};\theta\right)}{\delta\left(B_{T};\theta\right)}K\right\}.
\]
At maturity, $B_{T}=K\geq B_{T-}$. The functions $p$ and $e$ in
(\ref{eq:ao_bdy}) are defined as following:
\begin{equation}
p\left(t,B_{t}\right)\equiv\int_{0}^{K}\left(K-S_{T}\right)\psi\left(S_{T};S_{t}=B_{t}\right)dS_{T},\label{eq:ao_eo2}
\end{equation}
\begin{equation}
e\left(t,B_{t},B\left(\cdot\right)\right)\equiv\int_{t}^{T}\int_{0}^{B_{s}}\left(rK-\delta S_{s}\right)e^{-r\left(s-t\right)}\psi\left(S_{s};S_{t}=B_{t}\right)dS_{s}ds.\label{eq:ao_eep2}
\end{equation}
\end{theorem}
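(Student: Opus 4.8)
The plan is to derive the early exercise premium representation by combining the optimal stopping characterization of the American put with an appropriate change-of-variable (It\^o/Tanaka) formula applied to the discounted payoff process, and then to read off the free-boundary equation by evaluating the representation along the boundary. First I would recall that, under the stated smoothness of $r$, $\delta$, $\sigma$ and the existence of a unique strong solution, the value function $P(t,S_t)$ is known to be continuous, to satisfy $\mathcal{L}P=0$ in the continuation region $\mathcal{C}$ and $\mathcal{L}P = -(rK-\delta S_t)$ (equivalently $rP - \frac{\partial P}{\partial t} - \mu S_t \frac{\partial P}{\partial S_t} - \frac12\sigma^2 S_t^2 \frac{\partial^2 P}{\partial S_t^2} = rK - \delta S_t$) in the stopping region $\{S_t \le B_t\}$, and to be $C^1$ across the boundary by the smooth-fit condition $\lim_{S_t\downarrow B_t}\partial P/\partial S_t = -1$. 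These are the classical facts I would cite (Jacka 1991, Kim 1990, Carr--Jacka--Myneni 1992) rather than re-prove; the contribution here is applying them in the general-diffusion setting.

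Second, I would apply It\^o's formula to $e^{-rt}P(t,S_t)$ on $[0,T]$. Away from the boundary this is immediate from $P\in C^{1,2}$; across $\mathcal{B}$ one needs the generalized It\^o--Tanaka formula, which is available because $P$ is $C^1$ in $S$ with a second derivative that is locally bounded and has at most a jump discontinuity along the (continuous, and in fact monotone-near-$T$) boundary curve. Taking expectations and using that the $dW_t$ term is a true martingale (here I would invoke integrability of $P$ and its derivative, which follows from the payoff bound $0\le P\le K$ and standard estimates on $S$), the local-time/boundary contribution vanishes by smooth fit, and one is left with
\[
P(0,s_0) = \mathbb{E}^*\!\left[e^{-rT}(K-S_T)^+\right] + \mathbb{E}^*\!\left[\int_0^T e^{-rt}\,(rK-\delta S_t)\,\mathbf{1}\{S_t \le B_t\}\,dt\right].
\]
Writing the two expectations as integrals against the transition density $\psi$ gives exactly \eqref{eq:ao}--\eqref{eq:ao_eep}. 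The terminal condition $P(T,S_T)=(K-S_T)^+$ supplies the European-option term, and restricting $S_T\in(0,K)$ (the payoff vanishes otherwise) yields \eqref{eq:ao_eo}.

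Third, the recursive integral equation \eqref{eq:ao_bdy} follows by the same representation started at an arbitrary time $t$ instead of $0$: for $(t,B_t)$ on the boundary, continuity of $P$ and the value-matching condition give $P(t,B_t)=K-B_t$, while the right-hand side is precisely $p(t,B_t)+e(t,B_t,B(\cdot))$ with $p,e$ as in \eqref{eq:ao_eo2}--\eqref{eq:ao_eep2}. For the boundary condition at maturity I would take $t\uparrow T$ in \eqref{eq:ao_bdy}: the European term $p(t,B_t)\to 0$ and the premium term behaves like $\int_t^T(rK-\delta B_t)\,ds \approx (rK-\delta B_{T-})(T-t)$ to leading order, which forces $K - B_{T-} = 0$ unless $rK-\delta B_{T-}\le 0$, giving $B_{T-}=\min\{K,\,(r/\delta)K\}$; the jump $B_T = K \ge B_{T-}$ is the standard terminal value. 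I expect the main obstacle to be the rigorous justification of the It\^o--Tanaka step across the free boundary in the general-diffusion case — specifically, controlling the regularity of $\mathcal{B}$ (continuity, and the local behavior near $T$) and verifying that the local-time term on $\{S_t = B_t\}$ genuinely drops out under smooth fit — together with the integrability/martingale conditions needed to pass to expectations; everything after that is bookkeeping with the density $\psi$.
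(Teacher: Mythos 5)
The paper offers no proof of this theorem: it is imported from the literature (Kim 1990, Jacka 1991, Carr, Jarrow and Myneni 1992 for the lognormal case, Detemple and Tian 2002 for general diffusions), so there is no in-text argument to compare yours against. Your sketch is precisely the standard argument from those references --- apply the generalized It\^o formula to $e^{-rt}P(t,S_t)$, use $\mathcal{L}P=0$ in $\mathcal{C}$ and $\mathcal{L}P=-(rK-\delta S_t)$ in the stopping region, let smooth fit kill the local-time contribution, take expectations, and obtain the integral equation for $B_t$ from value matching $P(t,B_t)=K-B_t$ --- and it is correct in outline. The caveats you flag yourself (regularity of $\mathcal{B}$, justifying the It\^o--Tanaka step across the free boundary, and the true-martingale property of the stochastic integral) are exactly where the cited papers do the real work in the general-diffusion setting, so leaving them as citations is consistent with what this paper itself does. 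One point to tighten: for $B_{T-}=\min\bigl\{K,(r/\delta)K\bigr\}$ your leading-order expansion gives only the inequality $B_{T-}\le\min\{K,(r/\delta)K\}$ (immediate exercise requires $rK-\delta S\ge 0$); the reverse inequality needs a separate argument that every such point lies in the exercise region sufficiently close to maturity.
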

The equations (\ref{eq:ao_eo})-(\ref{eq:ao_bdy}) in Theorem \ref{thm:eep} for the valuation
of American options can be simplified if we make further assumptions
on the model. For example, if we assume that the stock price follows a geometric
Brownian Motion (GBM), that is, $r\left(S_{t};\theta\right)=r$, $\sigma\left(S_{t};\theta\right)=\sigma$
for constants $r$ and $\sigma$, and $\delta\left(S_{t};\theta\right)=0$,
then we have a Black-Scholes style formula for the valuation of American
options. We summarize this result in the following lemma.

\begin{lemma}[Exercise Premium Representation under GBM]\label{lemma:gbm}
If the stock price
$S$ follows geometric Brownian Motion, then the value of the American put option,
$P_{0}$, can be written as:
\begin{equation}
P_{0}=Ke^{-rT}N\left(k_{2}\left(S_{0},K,T\right)\right)-S_{0}N\left(k_{1}\left(S_{0},K,T\right)\right)+rk\int_{0}^{T}e^{-rt}N\left(b_{2}\left(S_{0},B_{t},t\right)\right)dt,\label{eq:ao_gbm}
\end{equation}
where
\[
k_{1}\left(S_{0},K,T\right)\equiv\frac{\log\left(K/S_{0}\right)-\rho_{1}T}{\sigma\sqrt{T}},
\]
\[
k_{2}\left(S_{0},K,T\right)\equiv\frac{\log\left(K/S_{0}\right)-\rho_{2}T}{\sigma\sqrt{T}},
\]
\[
\rho_{1}\equiv\rho_{2}+\sigma^{2}=r+\frac{\sigma^{2}}{2},
\]
\[
b_{2}\left(S_{0},B_{t},t\right)\equiv\frac{\log\left(B_{t}/S_{0}\right)-\rho_{2}t}{\sigma\sqrt{t}},
\]
and $B_{t}$ solves the following integral equation:
\begin{multline}
K-B_{t}=Ke^{-r\left(T-t\right)}N\left(k_{2}\left(B_{t},K,T-t\right)\right)-B_{t}N\left(k_{1}\left(B_{t},K,T-t\right)\right)\\
+rk\int_{t}^{T}e^{-r\left(s-t\right)}N\left(b_{2}\left(B_{t},B_{s},s-t\right)\right)ds.\label{eq:ao_gbm_bdy}
\end{multline}
\end{lemma}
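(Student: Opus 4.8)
The plan is to obtain Lemma \ref{lemma:gbm} as a direct specialization of Theorem \ref{thm:eep} to the geometric Brownian motion case $r(S_t;\theta)=r$, $\sigma(S_t;\theta)=\sigma$, $\delta(S_t;\theta)=0$. First I would verify the hypotheses of Theorem \ref{thm:eep}: constant coefficients are trivially continuously differentiable, and GBM admits a unique strong solution, so the representation (\ref{eq:ao})--(\ref{eq:ao_bdy}) applies. The only thing worth recording at this stage is that the terminal boundary condition $B_{T-}=\min\{K,(r(B_T;\theta)/\delta(B_T;\theta))K\}$ reduces to $B_{T-}=K$ when $\delta\equiv 0$ (reading $r/\delta$ as $+\infty$), consistent with $B_T=K$.

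The substantive step is to evaluate the integrals in (\ref{eq:ao_eo}) and (\ref{eq:ao_eep}) using the fact that, under the risk-neutral measure with $\delta=0$, the transition density $\psi(S_t;S_0=s_0)$ is log-normal, i.e. $\log S_t\mid S_0=s_0\sim\mathcal{N}\!\left(\log s_0+(r-\tfrac{\sigma^2}{2})t,\ \sigma^2 t\right)$. Two Gaussian integral identities do all the work: the CDF-type identity
\[
\int_0^a \psi(S_t;S_0=s_0)\,dS_t = N\!\left(\frac{\log(a/s_0)-(r-\sigma^2/2)t}{\sigma\sqrt{t}}\right),
\]
and the first-moment-type identity, obtained by completing the square in the exponent (equivalently, an exponential change of measure),
\[
\int_0^a S_t\,\psi(S_t;S_0=s_0)\,dS_t = s_0 e^{rt}\, N\!\left(\frac{\log(a/s_0)-(r+\sigma^2/2)t}{\sigma\sqrt{t}}\right).
\]
Applying these with $a=K$, $t=T$ to (\ref{eq:ao_eo}) produces the European put term $Ke^{-rT}N(k_2(S_0,K,T))-S_0 N(k_1(S_0,K,T))$, since $\rho_2=r-\sigma^2/2$ and $\rho_1=r+\sigma^2/2$. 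Applying the CDF-type identity with $a=B_t$ to the inner integral of (\ref{eq:ao_eep}) — where the $\delta S_t$ term has vanished because $\delta=0$ — gives $\int_0^T rK\, e^{-rt} N(b_2(S_0,B_t,t))\,dt$. Summing the two pieces yields (\ref{eq:ao_gbm}).

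Finally, I would repeat the same two identities for the boundary equation (\ref{eq:ao_bdy}), now with the density started from $(t,B_t)$ in place of $(0,s_0)$ — so $s_0$ is replaced by $B_t$ and the elapsed times are $T-t$ and $s-t$ — to turn (\ref{eq:ao_eo2})--(\ref{eq:ao_eep2}) into the right-hand side of (\ref{eq:ao_gbm_bdy}); it is the time-homogeneity of GBM that makes the transition density depend only on the elapsed time, which is what licenses this substitution.

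I do not expect a genuine obstacle: the content is bookkeeping of standard Gaussian integrals. The only points that require care are (i) the $\delta=0$ reading of the $B_{T-}$ formula, and (ii) keeping the shift between $k_1$ and $k_2$ (equivalently $\rho_1-\rho_2=\sigma^2$) straight when passing between the CDF-type and first-moment-type identities.
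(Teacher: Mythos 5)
Your proposal is correct and follows exactly the route the paper intends: Lemma \ref{lemma:gbm} is presented as a direct specialization of Theorem \ref{thm:eep} to constant coefficients with $\delta=0$, and the two Gaussian identities you invoke (the log-normal CDF integral giving the $N(k_2)$ and $N(b_2)$ terms, and the completed-square first-moment integral giving the $N(k_1)$ term with the $\sigma^2$ shift $\rho_1-\rho_2$) are precisely the computations that turn (\ref{eq:ao_eo})--(\ref{eq:ao_bdy}) into (\ref{eq:ao_gbm})--(\ref{eq:ao_gbm_bdy}). The only discrepancy is the paper's own typo $rk$ for $rK$ in the premium term, which your derivation correctly produces as $rK$.
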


\subsection{Hermite Polynomial-based Approximation}\label{subsection:hp}

Theorem \ref{thm:eep} provides an intuitive approach to the valuation of American
options in diffusion models; however, we still have two difficulties.
First, most of the diffusion models do not admit a closed-form solution
for the transition density. Second, the exercise boundary $\mathcal{B}$
is unknown in (\ref{eq:ao_eep}), and we need to solve the integral
equation (\ref{eq:ao_bdy}) recursively to compute $\mathcal{B}$.

In this study, we propose the use of the Hermite polynomials to approximate
the transition density. Our approach is based on the work of Ait-Sahalia (2002, 2006) and
Yu (2007). The Hermite polynomial approach by Ait-Sahalia (2002) provided
an explicit sequence of closed-form functions to approximate the unknown
transition density. Ait-Sahalia (2006) and Yu (2007) extended the approach
to multivariate case and jump-diffusion models.

To approximate the transition density of the stock price $S$, we
first transform $S$ into a new random variable $Y$ by defining $Y\equiv\gamma\left(S\right)=\int^{S}du/\sigma\left(u\right)$.
We know that $Y$ has a unit diffusion, that is,
\[
dY_{t}=\mu_{Y}\left(Y_{t};\theta\right)dt+dW_{t},
\]
where
\begin{equation}
\mu_{Y}\left(y;\theta\right)=\frac{\mu\left(\gamma^{-1}\left(y;\theta\right);\theta\right)}{\sigma\left(\gamma^{-1}\left(y;\theta\right);\theta\right)}-\frac{1}{2}\frac{\partial\sigma}{\partial S}\left(\gamma^{-1}\left(y;\theta\right);\theta\right).\label{eq:herm_miuy}
\end{equation}
We denote the domain of $Y$ as $D_{Y}=\left(\underline{y},\bar{y}\right)$.
According to Ait-Sahalia (2002), the transition density of $Y$ can
be approximated using Hermite polynomials, and the transition
density of $S$ can then be derived from that of $Y.$ More specifically,
the transition density of $S$ with time interval $\varDelta$ can be
approximated up to order $m$ as following:
\begin{multline}
\tilde{\psi}^{\left(m\right)}\left(S_{t+\varDelta}=S^{\prime};S_{t}=S\right)=\sigma^{-1}\left(S^{\prime};\theta\right)\varDelta^{-\frac{1}{2}}\phi\left(\frac{\gamma\left(S^{\prime};\theta\right)-\gamma\left(S;\theta\right)}{\varDelta^{\frac{1}{2}}}\right)\times\\
\exp\left(\int_{\gamma\left(S;\theta\right)}^{\gamma\left(S^{\prime};\theta\right)}\mu\left(w;\theta\right)dw\right)\times\sum_{k=0}^{m}c_{k}\left(\gamma\left(S^{\prime};\theta\right)|\gamma\left(S;\theta\right);\theta\right)\frac{\varDelta^{k}}{k!},\label{eq:herm_density}
\end{multline}
where $\phi\left(z\right)\equiv\exp\left(-z^{2}/2\right)/\sqrt{2\pi}$
denotes the density function of standard normal distribution, and
for all $j\geq1$,
\begin{multline}
c_{j}\left(\gamma\left(S^{\prime};\theta\right)|\gamma\left(S;\theta\right);\theta\right)=j\left(S^{\prime}-S\right)^{-j}\int_{\gamma\left(S;\theta\right)}^{\gamma\left(S^{\prime};\theta\right)}\left(w-\gamma\left(S;\theta\right)\right)^{j-1}\\
\times\left\{ \lambda\left(w;\theta\right)c_{j-1}\left(w|\gamma\left(S;\theta\right);\theta\right)+\left(\partial^{2}c_{j-1}\left(w|\gamma\left(S;\theta\right);\theta\right)/\partial w^{2}\right)/2\right\} dw\label{eq:herm_c}
\end{multline}
where $\lambda\left(x;\theta\right)\equiv-\left(\mu_{Y}^{2}\left(x;\theta\right)+\partial\mu_{Y}\left(x;\theta\right)/\partial x\right)/2$
with $\mu_{Y}$ defined in (\ref{eq:herm_miuy}), and $c_{0}=1.$

Once we obtain the approximation of the transition density of $S$ in
(\ref{eq:herm_density}), we can plug $\tilde{\psi}^{\left(m\right)}$
into Theorem \ref{thm:eep} and obtain the approximation of the valuation of American
options. More specifically, we have the following approximated early
exercise premium representation for the value of the American put option up to
order $m$:
\begin{equation}
\tilde{P}_{0}^{\left(m\right)}=\tilde{p}_{0}^{\left(m\right)}+\tilde{e}_{0}^{\left(m\right)}\label{eq:ao_appr}
\end{equation}
where $\tilde{p}_{0}^{\left(m\right)}\equiv\tilde{p}^{\left(m\right)}\left(0,S_{0}=s_{0}\right)=\int_{0}^{K}e^{-rT}\left(K-S_{T}\right)\tilde{\psi}^{\left(m\right)}\left(S_{T};S_{0}=s_{0}\right)dS_{T}$
represents the approximated price of a European put option and $\tilde{e}_{0}^{\left(m\right)}$
is the approximated early exercise premium given by
\begin{equation}
\tilde{e}_{0}^{\left(m\right)}\equiv\tilde{e}^{\left(m\right)}\left(0,S_{0}=s_{0},\tilde{B}^{\left(m\right)}\left(\cdot\right)\right)=\int_{0}^{T}\int_{0}^{\tilde{B}_{t}^{\left(m\right)}}\left(rK-\delta S_{t}\right)e^{-rt}\tilde{\psi}^{\left(m\right)}\left(S_{t};S_{0}=s_{0}\right)dS_{t}dt.\label{eq:ao_eep_appr}
\end{equation}
The approximated exercise boundary up to order $m$, $\tilde{B}_{t}^{\left(m\right)}$,
solves the following recursive nonlinear integral equation:
\begin{equation}
K-\tilde{B}_{t}^{\left(m\right)}=\tilde{p}^{\left(m\right)}\left(t,\tilde{B}_{t}^{\left(m\right)}\right)+\tilde{e}^{\left(m\right)}\left(t,\tilde{B}_{t}^{\left(m\right)},\tilde{B}^{\left(m\right)}\left(\cdot\right)\right)\quad\forall t\in\left[0,T\right).\label{eq:ao_bdy_appr}
\end{equation}
Similarly to (\ref{eq:ao_eo2})--(\ref{eq:ao_eep2}), $\tilde{p}^{\left(m\right)}$
and $\tilde{e}^{\left(m\right)}$ are defined as:
\begin{equation}
\tilde{p}^{\left(m\right)}\left(t,\tilde{B}_{t}^{\left(m\right)}\right)\equiv\int_{0}^{K}\left(K-S_{T}\right)\tilde{\psi}^{\left(m\right)}\left(S_{T};S_{t}=\tilde{B}_{t}^{\left(m\right)}\right)dS_{T},\label{eq:ao_eo2_appr}
\end{equation}
\begin{equation}
\tilde{e}^{\left(m\right)}\left(t,\tilde{B}_{t}^{\left(m\right)},\tilde{B}^{\left(m\right)}\left(\cdot\right)\right)\equiv\int_{t}^{T}\int_{0}^{\tilde{B}_{s}^{\left(m\right)}}\left(rK-\delta S_{s}\right)e^{-r\left(s-t\right)}\tilde{\psi}^{\left(m\right)}\left(S_{s};S_{t}=\tilde{B}_{t}^{\left(m\right)}\right)dS_{s}ds,\label{eq:ao_eep2_appr}
\end{equation}
subject to the boundary condition 
\[
\tilde{B}_{T-}^{\left(m\right)}\equiv\lim_{t\uparrow T}\tilde{B}_{t}^{\left(m\right)}=\min\left\{ K,\frac{r\left(B_{T};\theta\right)}{\delta\left(B_{T};\theta\right)}K\right\} ,
\]
and $\tilde{B}_{T}^{\left(m\right)}=B_{T}=K\geq\tilde{B}_{T-}^{\left(m\right)}$.

The following theorem guarantees that the proposed approach in (\ref{eq:ao_appr})--(\ref{eq:ao_eep2_appr})
is a well-behaved approximation of the value of American options.
\newpage
\begin{theorem}\label{thm:aoh}
Under Assumptions \ref{ass_ao_1}--\ref{ass_ao_3} given in Appendix A, as $m\rightarrow\infty,$
we have
\begin{enumerate}
\item $\tilde{p}_{0}^{\left(m\right)}\rightarrow p_{0}$,
\item $\tilde{B}_{t}^{\left(m\right)}\rightarrow B_{t}$ for any $t\in\left[0,T\right]$,
\item $\tilde{e}_{0}^{\left(m\right)}\rightarrow e_{0}$,
\item $\tilde{P}_{0}^{\left(m\right)}\rightarrow P_{0}$.
\end{enumerate}
\end{theorem}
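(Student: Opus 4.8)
\medskip
\noindent\textbf{Proof proposal.}
The whole argument rests on one analytic input, which is precisely what the assumptions of Appendix~A are designed to provide: the Hermite expansion of Ait-Sahalia (2002) converges, in the sense that $\tilde{\psi}^{(m)}(S';S)\to\psi(S';S)$ as $m\to\infty$, uniformly over $(S',S)$ in compact subsets of $D_S\times D_S$. (After the change of variables $Y=\gamma(S)$ the series for the density of $Y$ converges uniformly on compacta, and multiplication by the smooth Jacobian $\sigma^{-1}(S';\theta)$ and by the exponential factor in (\ref{eq:herm_density}) preserves uniform convergence.) I would also use two structural facts that the assumptions and Theorem~\ref{thm:eep} are meant to deliver: the integral equation (\ref{eq:ao_bdy}) characterizes the true boundary $B$ uniquely (among continuous functions with the prescribed terminal behavior), with values in a compact interval contained in $D_S$, and the weights $e^{-rT}(K-S_T)$ and $(rK-\delta S_t)e^{-rt}$ are bounded on the regions that occur. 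With these in place, every integral in (\ref{eq:ao_appr})--(\ref{eq:ao_eep2_appr}) is an integral, over a compact set, of a uniformly bounded integrand against a density sequence that converges uniformly, and this is the engine for all four claims. For item~1 in particular, writing $\tilde{p}_0^{(m)}-p_0=\int_0^K e^{-rT}(K-S_T)\bigl(\tilde{\psi}^{(m)}(S_T;s_0)-\psi(S_T;s_0)\bigr)\,dS_T$ and using uniform convergence of $\tilde{\psi}^{(m)}$ on $[0,K]$ with the bounded prefactor gives $\tilde{p}_0^{(m)}\to p_0$; the same estimate, uniform in an auxiliary argument, gives $\tilde{p}^{(m)}(t,b)\to p(t,b)$ and, via (\ref{eq:ao_eep2_appr}), $\tilde{e}^{(m)}(t,b,C(\cdot))\to e(t,b,C(\cdot))$, both uniformly in $(t,b)$ over the relevant compact set and uniformly in the boundary argument $C$.

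Item~2 is the crux, and I would obtain $\tilde{B}^{(m)}\to B$ by a compactness-plus-uniqueness argument. From (\ref{eq:ao_bdy_appr}) and the bounds on the data, the family $\{\tilde{B}^{(m)}\}_m$ is uniformly bounded and monotone in $t$ (or, failing monotonicity, equicontinuous on each $[0,T-h]$, because the right-hand side of (\ref{eq:ao_bdy_appr}) is Lipschitz in $t$ with a constant independent of $m$ there, $\tilde{p}^{(m)},\tilde{e}^{(m)},\tilde{\psi}^{(m)}$ and their $t$-derivatives being uniformly bounded), so Helly's or Arzel\`a--Ascoli's selection theorem extracts, from any subsequence, a further subsequence converging to a limit $\bar B$ continuous on $[0,T)$. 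Passing to the limit in (\ref{eq:ao_bdy_appr}) along that subsequence---using (i) uniform convergence $\tilde{\psi}^{(m)}\to\psi$ on compacta, (ii) $\tilde{B}_t^{(m)}\to\bar B_t$ together with equicontinuity, uniform in $m$, of $x\mapsto\tilde{\psi}^{(m)}(\,\cdot\,;S_t=x)$ to accommodate the moving conditioning point, and (iii) dominated convergence in the double integral of (\ref{eq:ao_eep2_appr})---yields $K-\bar B_t=p(t,\bar B_t)+e(t,\bar B_t,\bar B(\cdot))$ for all $t\in[0,T)$, with the terminal value $\bar B_{T-}=\min\{K,(r/\delta)K\}$ that (\ref{eq:ao_bdy}) forces on both the true and the approximate boundary. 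Uniqueness of the solution of (\ref{eq:ao_bdy}) then forces $\bar B=B$, and since every subsequence has a further subsequence with this same limit, the full sequence converges: $\tilde{B}_t^{(m)}\to B_t$ for every $t\in[0,T]$. (Equivalently, read $B$ and $\tilde{B}^{(m)}$ as fixed points of the solution maps $\Phi$ and $\Phi^{(m)}$ of (\ref{eq:ao_bdy}) and (\ref{eq:ao_bdy_appr}); then $\Phi^{(m)}\to\Phi$ uniformly by the estimates above, $\Phi$ is a contraction on short horizons with backward iteration over subintervals, and fixed points are stable under uniform perturbation---I would write up whichever route is cleaner.)

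For items~3 and~4, with $\tilde{B}^{(m)}\to B$ and $\tilde{\psi}^{(m)}\to\psi$ available, I decompose
\[
\tilde{e}_0^{(m)}-e_0=\int_0^T\int_0^{B_t}(rK-\delta S_t)e^{-rt}\bigl(\tilde{\psi}^{(m)}-\psi\bigr)\,dS_t\,dt+\int_0^T\Bigl(\int_0^{\tilde{B}_t^{(m)}}-\int_0^{B_t}\Bigr)(rK-\delta S_t)e^{-rt}\tilde{\psi}^{(m)}\,dS_t\,dt.
\]
The first term tends to $0$ by uniform convergence of the densities on the compact region $[0,T]\times[0,K]$; the second is bounded by $C\int_0^T|\tilde{B}_t^{(m)}-B_t|\,dt$, which tends to $0$ by the uniform bound on $(rK-\delta S_t)e^{-rt}\tilde{\psi}^{(m)}$ together with $\tilde{B}^{(m)}\to B$ and dominated convergence. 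Hence $\tilde{e}_0^{(m)}\to e_0$, which is item~3, and item~4 is immediate from $\tilde{P}_0^{(m)}=\tilde{p}_0^{(m)}+\tilde{e}_0^{(m)}$, $P_0=p_0+e_0$, and items~1 and~3.

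The main obstacle lives entirely in item~2. First, passing to the limit inside the iterated integral of (\ref{eq:ao_eep2_appr}) when the density is evaluated at the \emph{moving} initial point $\tilde{B}_t^{(m)}$ needs more than pointwise convergence of the densities: one needs equicontinuity, uniform in $m$, of $x\mapsto\tilde{\psi}^{(m)}(\,\cdot\,;S_t=x)$, which I would extract from the smoothness built into Assumptions~\ref{ass_ao_1}--\ref{ass_ao_3} and from locally uniform convergence of the Hermite series together with its derivatives. Second, one must control the regime $t\uparrow T$, where the time step $\Delta=T-t$ shrinks to $0$, the Hermite approximation degenerates, and the boundary flattens; I would run the compactness (or fixed-point) argument on $[0,T-h]$ with $m$-independent constants and then let $h\downarrow0$, using continuity of $B$ and of each $\tilde{B}^{(m)}$ up to $T$ and their common, explicitly prescribed terminal value. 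Establishing---or, better, invoking from the assumptions---uniqueness of the solution of (\ref{eq:ao_bdy}) is the other load-bearing ingredient; once it and the two points above are secured, the remainder is bookkeeping around uniform convergence on compact sets.
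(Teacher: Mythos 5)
Your items 1, 3 and 4 track the paper closely: the paper also reduces item 1 to convergence of $\tilde{\psi}^{(m)}\to\psi$ (it argues via explicit bounds on the Hermite coefficients and dominated convergence rather than uniform convergence on compacta, but the engine is the same), and its Steps 3--4 are exactly your decomposition of $\tilde{e}_0^{(m)}-e_0$ into a density-error term and a boundary-error term, followed by adding the two pieces of the EEP representation.

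Where you genuinely diverge is item 2, and you have correctly identified it as the crux. The paper does \emph{not} use compactness plus uniqueness of the solution of (\ref{eq:ao_bdy}). Instead it exploits the backward-recursive structure: taking the boundary on $(t,T]$ as given (first equal to, then merely convergent to, the true one), it defines $\tilde{F}^{(m)}(b)=\tilde{p}^{(m)}(t,b)+\tilde{e}^{(m)}(t,b,\cdot)+b$ and $F(b)=p(t,b)+e(t,b,\cdot)+b$, characterizes $\tilde{B}_t^{(m)}=(\tilde{F}^{(m)})^{-1}(K)$ and $B_t=F^{-1}(K)$, and deduces $\tilde{B}_t^{(m)}\to B_t$ from $\tilde{F}^{(m)}\to F$ and convergence of the inverses. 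This pointwise-in-$t$ root-finding view buys the paper freedom from any global uniqueness statement about the integral equation; its cost is an unargued claim that $F$ is invertible near $K$ and that convergence of the functions implies convergence of the preimages of $K$ (which needs, e.g., strict monotonicity of $F$ with a derivative bounded away from zero, or at least local uniform convergence plus a simple-root condition). Your route, by contrast, imports uniqueness of the solution of (\ref{eq:ao_bdy}) as a load-bearing ingredient. You flag this honestly, but it is a genuine gap relative to what Assumptions \ref{ass_ao_1}--\ref{ass_ao_3} supply: uniqueness of the EEP integral equation is a deep result even for geometric Brownian motion (it was open for years after Kim/Jacka/Carr--Jarrow--Myneni and required a nontrivial local-time argument), and nothing in the paper's hypotheses hands it to you for general diffusions. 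So your subsequence-extraction and limit-passing machinery is sound, and your attention to the moving conditioning point and the degenerate regime $t\uparrow T$ is more careful than the paper's, but without either proving uniqueness or switching to the paper's recursive inverse-function characterization, the identification of the subsequential limit $\bar{B}$ with $B$ does not close.
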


\begin{proof}
\textit{In Appendix B}
\end{proof}

\section{Valuation of American Options in Jump-Diffusion Models}\label{section:ao_3}

\subsection{Valuation of American Options}

In this section, we discuss the approximation of the value of American
options when the underlying asset price follows a jump-diffusion process.
Because of the discontinuous nature of the asset price path, the exercise
premium representation is different from that without jumps. Specifically, we consider the stock price under the risk-neutral measure, and assume that it follows:
\begin{equation}
d\log S_{t}=\left(r\left(S_{t};\theta\right)-\delta\left(S_{t};\theta\right)-\rho j\right)dt+\sigma\left(S_{t};\theta\right) dW_{t}+\left(J-1\right)dq_{t}\label{eq:aomodel_jump}
\end{equation}
where $dq$ is a Poisson process with rate $\rho t$, $J-1$ is the
proportional change in the price due to a jump with density function $\nu$ as a function of jump size with support $D_{J}$, and $j=E\left(J-1\right)$. We assume $r\left(S_{t};\theta\right)$, $\delta\left(S_{t};\theta\right)$, and $\sigma\left(S_{t};\theta\right)$ are smooth functions of $S_{t}$.
Then, based on Gukhal (2001), the value of the American put option, $P_{0}\equiv P\left(0,S_{0}=s_{0}\right)$, has the following
representation:
\begin{equation}
P_{0}=p_{0}+e_{0}+g_{0}\label{eq:ao_jump}
\end{equation}
where
\begin{equation}
p_{0}\equiv p\left(0,S_{0}=s_{0}\right)=\int_{0}^{K}e^{-rT}\left(K-S_{T}\right)\psi\left(S_{T};S_{0}=s_{0}\right)dS_{T}\label{eq:ao_jump_eo}
\end{equation}
\begin{equation}
e_{0}\equiv e\left(0,S_{0}=s_{0},B\left(\cdot\right)\right)=\int_{0}^{T}\int_{0}^{B_{t}}\left(rK-\delta S_{t}\right)e^{-rt}\psi\left(S_{t};S_{0}=s_{0}\right)dS_{t}dt\label{eq:ao_jump_eep}
\end{equation}
and
\begin{multline}
g_{0}\equiv g\left(0,S_{0}=s_{0},B\left(\cdot\right)\right)=\int_{0}^{T}\int_{0}^{B_{t-}}\int_{B_{t}}^{\infty}e^{-rt}\rho\left(P\left(t,J_{t}S_{t-}\right)-\left(K-J_{t}S_{t-}\right)\right)\\
\times\psi\left(S_{t-};S_{0}=s_{0}\right)\psi\left(J_{t}S_{t-};S_{t-}\right)d\left(J_{t}S_{t-}\right)dS_{t-}dt.\label{eq:ao_jump_jump}
\end{multline}
The exercise boundary $B_{t}$ solves the following integral equation
\begin{equation}
K-B_{t}=p\left(t,B_{t}\right)+e\left(t,B_{t},B\left(\cdot\right)\right)-g\left(t,B_{t},B\left(\cdot\right)\right)\label{eq:ao_jump_bdy}
\end{equation}
where
\[
p\left(t,B_{t}\right)=\int_{0}^{K}\left(K-S_{T}\right)\psi\left(S_{T};B_{t}\right)dS_{T}
\]
\[
e\left(t,B_{t},B\left(\cdot\right)\right)=\int_{t}^{T}\int_{0}^{B_{s}}\left(rK-\delta S_{s}\right)e^{-r\left(s-t\right)}\psi\left(S_{s};B_{t}\right)dS_{s}ds
\]
\begin{multline*}
g\left(t,B_{t},B\left(\cdot\right)\right)=\int_{t}^{T}\int_{0}^{B_{s-}}\int_{B_{s}}^{\infty}e^{-r\left(s-t\right)}\rho\left(P\left(s,J_{s}S_{s-}\right)-\left(K-J_{s}S_{s-}\right)\right)\\
\times\psi\left(S_{s-};B_{t}\right)\psi\left(J_{s}S_{s-};S_{s-}\right)d\left(J_{s}S_{s-}\right)dS_{s-}ds.
\end{multline*}

The representation in (\ref{eq:ao_jump}) has a straightforward interpretation.
As in the case without jumps, $p_{0}$ represents the price of a European
put option, $e_{0}$ is the early exercise premium, and $g_{0}$ is the rebalancing
cost due to the jumps of stock prices from the exercise region (the stock price is below
the exercise boundary) into the continuation region (the stock price
is above the exercise boundary).

\subsection{Hermite Polynomial-based Approximation}

Our approach to studying jump-diffusion models is similar to our approach in
Section \ref{section:ao_2}. We first approximate the transition density using Hermite
polynomials. According to Yu (2007), an approximation of the order $m>0$
is obtained as follows:
\begin{multline}
\tilde{\psi}^{\left(m\right)}\left(S_{t+\varDelta}=S^{\prime};S_{t}=S\right)=\varDelta^{-\frac{1}{2}}\exp\left[-\frac{C^{\left(-1\right)}\left(S,S^{\prime}\right)}{\varDelta}\right]\sum_{k=0}^{m}C^{\left(k\right)}\left(S,S^{\prime}\right)\varDelta^{k}\\
+\sum_{k=1}^{m}D^{\left(k\right)}\left(S,S^{\prime}\right)\varDelta^{k}\label{eq:herm_jump_density}
\end{multline}
where
\begin{equation}
C^{\left(-1\right)}\left(S,S^{\prime}\right)=\frac{1}{2}\left[\int_{S}^{S^{\prime}}\sigma\left(s\right)^{-1}ds\right]^{2},\label{eq:herm_jump_c-1}
\end{equation}
\begin{equation}
C^{\left(0\right)}\left(S,S^{\prime}\right)=\frac{1}{\sqrt{2\pi}\sigma\left(S^{\prime}\right)}\exp\left[\int_{S}^{S^{\prime}}\frac{\mu\left(s\right)}{\sigma^{2}\left(s\right)}-\frac{\sigma^{\prime}\left(s\right)}{2\sigma\left(s\right)}ds\right],\label{eq:herm_jump_c0}
\end{equation}
\begin{multline}
C^{\left(k+1\right)}\left(S,S^{\prime}\right)=-\left[\int_{S}^{S^{\prime}}\sigma\left(s\right)^{-1}ds\right]^{-k+1}\int_{S}^{S^{\prime}}\left\{\exp\left[\int_{u}^{S}\frac{\mu\left(u\right)}{\sigma^{2}\left(u\right)}-\frac{\sigma^{\prime}\left(u\right)}{2\sigma\left(u\right)}du\right]\right.\\
\left.\times\sigma\left(s\right)^{-1}\left[\int_{s}^{S^{\prime}}\sigma\left(u\right)^{-1}du\right]^{k}\left[\rho\left(s\right)-\mathcal{L}\right]C^{\left(k\right)}\left(s,S^{\prime}\right)\right\}ds,\quad for\;k\geq0,\label{eq:herm_jump_ck}
\end{multline}
\begin{equation}
D^{\left(1\right)}\left(S,S^{\prime}\right)=\rho\left(S\right)-\upsilon\left(S^{\prime}-S\right),\label{eq:herm_jump_d1}
\end{equation}
\begin{multline}
D^{\left(k+1\right)}\left(S,S^{\prime}\right)=\frac{1}{1+k}\left[\mathfrak{L}D^{\left(k\right)}\left(S,S^{\prime}\right)+\right.\\
\left.\sqrt{2\pi}\rho\left(S\right)\sum_{r=0}^{k}\frac{M_{2r}^{1}}{\left(2r\right)!}\frac{\partial^{2r}}{\partial w^{2r}}m_{k-r}\left(S,S^{\prime},w\right)|_{w=0}\right],\quad for\;k\geq0,\label{eq:herm_jump_dk}
\end{multline}
where 
\begin{equation}
m_{k}\left(S,S^{\prime},w\right)\equiv C^{\left(k\right)}\left(w_{B}^{-1}\left(w\right),S^{\prime}\right)\upsilon\left(w_{B}^{-1}\left(w\right)-S\right)\sigma\left(w_{B}^{-1}\left(w\right)\right),\label{eq:herm_jump_m}
\end{equation}
\begin{equation}
M_{2r}^{1}\equiv1/\sqrt{2\pi}\int_{\mathbb{R}}\exp\left(-s^{2}/2\right)s^{2r}ds,\label{eq:herm_jump_M}
\end{equation}
\begin{equation}
w_{B}\left(S,S^{\prime}\right)=\int_{S^{\prime}}^{S}\sigma\left(s\right)^{-1}ds,\label{eq:herm_jump_w}
\end{equation}
\begin{equation}
\mathscr{L}f\left(s,s^{\prime}\right)=\frac{1}{2}\sigma^{2}s^{2}\frac{\partial^{2}f}{\partial s^{2}}\left(s,s^{\prime}\right)+\left(r-\delta-\rho j\right)s\frac{\partial f}{\partial s}\left(s,s^{\prime}\right),\label{eq:herm_jump_A}
\end{equation}
and
\begin{equation}
\mathfrak{L}f\left(s,s^{\prime}\right)=\mathscr{L}f\left(s,s^{\prime}\right)+\rho\int_{D_{J}}\left[f\left(s+c,s^{\prime}\right)-f\left(s,s^{\prime}\right)\right]\upsilon\left(c\right)dc.\label{eq:herm_jump_L}
\end{equation}

Once we obtain the Hermite polynomial approximation of the transition
density as above, we plug the approximation into (\ref{eq:ao_jump_eo})-(\ref{eq:ao_jump_jump}),
and solve the integral equation (\ref{eq:ao_jump_bdy}) recursively.
More specifically, we have the approximated value of the American put option up
to order $m$, $\tilde{P}_{0}^{\left(m\right)}\equiv\tilde{P}^{\left(m\right)}\left(0,S_{0}=s_{0}\right)$:
\begin{equation}
\tilde{P}_{0}^{\left(m\right)}=\tilde{p}_{0}^{\left(m\right)}+\tilde{e}_{0}^{\left(m\right)}+\tilde{g}_{0}^{\left(m\right)}\label{eq:ao_jump_appr}
\end{equation}
where
\begin{equation}
\tilde{p}_{0}^{\left(m\right)}\equiv\tilde{p}^{\left(m\right)}\left(0,S_{0}=s_{0}\right)=\int_{0}^{K}e^{-rT}\left(K-S_{T}\right)\tilde{\psi}^{\left(m\right)}\left(S_{T};S_{0}=s_{0}\right)dS_{T},\label{eq:ao_jump_eo_appr}
\end{equation}
\begin{equation}
\tilde{e}_{0}^{\left(m\right)}\equiv\tilde{e}^{\left(m\right)}\left(0,S_{0}=s_{0},\tilde{B}^{\left(m\right)}\left(\cdot\right)\right)=\int_{0}^{T}\int_{0}^{\tilde{B}_{t}^{\left(m\right)}}\left(rK-\delta S_{t}\right)e^{-rt}\tilde{\psi}^{\left(m\right)}\left(S_{t};S_{0}=s_{0}\right)dS_{t}dt,\label{eq:ao_jump_eep_appr}
\end{equation}
and
\begin{multline}
\tilde{g}_{0}^{\left(m\right)}\equiv\tilde{g}^{\left(m\right)}\left(0,S_{0}=s_{0},\tilde{B}^{\left(m\right)}\left(\cdot\right)\right)=\int_{0}^{T}\int_{0}^{\tilde{B}_{t-}^{\left(m\right)}}\int_{\tilde{B}_{t}^{\left(m\right)}}^{\infty}e^{-rt}\rho\left(\tilde{P}^{\left(m\right)}\left(t,J_{t}S_{t-}\right)-\left(K-J_{t}S_{t-}\right)\right)\\
\times\tilde{\psi}^{\left(m\right)}\left(S_{t-};S_{0}=s_{0}\right)\tilde{\psi}^{\left(m\right)}\left(J_{t}S_{t-};S_{t-}\right)d\left(J_{t}S_{t-}\right)dS_{t-}dt.\label{eq:ao_jump_jump_appr}
\end{multline}
The approximated exercise boundary up to order $m$, $\tilde{B}_{t}^{\left(m\right)}$,
solves the following recursive nonlinear integral equation:
\begin{equation}
K-\tilde{B}_{t}^{\left(m\right)}=\tilde{p}^{\left(m\right)}\left(t,\tilde{B}_{t}^{\left(m\right)}\right)+\tilde{e}^{\left(m\right)}\left(t,\tilde{B}_{t}^{\left(m\right)},\tilde{B}^{\left(m\right)}\left(\cdot\right)\right)-\tilde{g}^{\left(m\right)}\left(t,\tilde{B}_{t}^{\left(m\right)},\tilde{B}^{\left(m\right)}\left(\cdot\right)\right),\label{eq:ao_jump_bdy_appr}
\end{equation}
where
\begin{equation}
\tilde{p}^{\left(m\right)}\left(t,\tilde{B}_{t}^{\left(m\right)}\right)\equiv\int_{0}^{K}\left(K-S_{T}\right)\tilde{\psi}^{\left(m\right)}\left(S_{T};S_{t}=\tilde{B}_{t}^{\left(m\right)}\right)dS_{T},\label{eq:ao_jump_eo2_appr}
\end{equation}
\begin{equation}
\tilde{e}^{\left(m\right)}\left(t,\tilde{B}_{t}^{\left(m\right)},\tilde{B}^{\left(m\right)}\left(\cdot\right)\right)\equiv\int_{t}^{T}\int_{0}^{\tilde{B}_{s}^{\left(m\right)}}\left(rK-\delta S_{s}\right)e^{-r\left(s-t\right)}\tilde{\psi}^{\left(m\right)}\left(S_{s};S_{t}=\tilde{B}_{t}^{\left(m\right)}\right)dS_{s}ds,\label{eq:ao_jump_eep2_appr}
\end{equation}
\begin{multline}
\tilde{g}^{\left(m\right)}\left(t,\tilde{B}_{t}^{\left(m\right)},\tilde{B}^{\left(m\right)}\left(\cdot\right)\right)\equiv\int_{t}^{T}\int_{0}^{\tilde{B}_{s-}^{\left(m\right)}}\int_{\tilde{B}_{s}^{\left(m\right)}}^{\infty}e^{-r\left(s-t\right)}\rho\left(\tilde{P}^{\left(m\right)}\left(s,J_{s}S_{s-}\right)-\left(K-J_{s}S_{s-}\right)\right)\\
\times\tilde{\psi}^{\left(m\right)}\left(S_{s-};\tilde{B}_{t}^{\left(m\right)}\right)\tilde{\psi}^{\left(m\right)}\left(J_{s}S_{s-};S_{s-}\right)d\left(J_{s}S_{s-}\right)dS_{s-}ds.\label{eq:ao_jump_jump2_appr}
\end{multline}

\section{Numerical Method and Algorithm}\label{section:ao_4}

Following Detemple (2006), we divide the period $\left[0,T\right]$
into $N$ equal subintervals and let $\Delta=T/N$. We then use a
step function to compute the exercise boundary recursively. The algorithm
works as follows: suppose that our step function approximation of the exercise
boundary is $\left\{ \tilde{B}_{n\Delta}^{\left(m,N\right)},n=0,\ldots,N\right\} $.
 The terminal condition tells us that
 \[
 \tilde{B}_{N\Delta}^{\left(m,N\right)}=\min\left\{ K,\frac{r\left(B_{T};\theta\right)}{\delta\left(B_{T};\theta\right)}\times K\right\}.
 \]
Suppose that $\tilde{B}_{l\Delta}^{\left(m,N\right)}$ is known for
all $l>n$, then $\left\{ \tilde{B}_{l\Delta}^{\left(m,N\right)},l=0,\ldots,n\right\} $
can be obtained by discretizing the integral in (\ref{eq:ao_bdy_appr})
for a diffusion model, or (\ref{eq:ao_jump_bdy_appr}) for a jump-diffusion
model using the trapezoidal rule. For example, we obtain the following
equation for diffusion models:
\begin{multline}
K-\tilde{B}_{l\Delta}^{\left(m,N\right)}=\tilde{p}^{\left(m\right)}\left(l\Delta,\tilde{B}_{l\Delta}^{\left(m,N\right)}\right)+\sum_{q=l+1}^{N-l}\tilde{\epsilon}^{\left(m\right)}\left(\left(q-l\right)\Delta,\tilde{B}_{l\Delta}^{\left(m,N\right)},\tilde{B}_{q\Delta}^{\left(m,N\right)}\right)\Delta\\
+\left[\tilde{\epsilon}^{\left(m\right)}\left(0,\tilde{B}_{l\Delta}^{\left(m,N\right)},\tilde{B}_{l\Delta}^{\left(m,N\right)}\right)+\tilde{\epsilon}^{\left(m\right)}\left(\left(N-l\right)\Delta,\tilde{B}_{l\Delta}^{\left(m,N\right)},\tilde{B}_{N\Delta}^{\left(m,N\right)}\right)\right]\frac{\Delta}{2}\label{eq:ao_num}
\end{multline}
where 
\begin{equation}
\tilde{\epsilon}^{\left(m\right)}\left(s\Delta,\tilde{B}_{t}^{\left(m\right)},\tilde{B}_{t+s\Delta}^{\left(m\right)}\right)\equiv\int_{0}^{\tilde{B}_{t+s\Delta}^{\left(m\right)}}\left(rK-\delta S_{t+s\Delta}\right)e^{-rs\Delta}\tilde{\psi}^{\left(m\right)}\left(S_{t+s\Delta};S_{t}=\tilde{B}_{t}^{\left(m\right)}\right)dS_{t+s\Delta}\label{eq:ao_num_e}
\end{equation}
And for jump-diffusion models, we have
\begin{multline}
K-\tilde{B}_{l\Delta}^{\left(m,N\right)}=\tilde{p}^{\left(m\right)}\left(l\Delta,\tilde{B}_{l\Delta}^{\left(m,N\right)}\right)+\sum_{q=l+1}^{N-l}\tilde{\epsilon}^{\left(m\right)}\left(\left(q-l\right)\Delta,\tilde{B}_{l\Delta}^{\left(m,N\right)},\tilde{B}_{q\Delta}^{\left(m,N\right)}\right)\Delta\\
+\left[\tilde{\epsilon}^{\left(m\right)}\left(0,\tilde{B}_{l\Delta}^{\left(m,N\right)},\tilde{B}_{l\Delta}^{\left(m,N\right)}\right)+\tilde{\epsilon}^{\left(m\right)}\left(\left(N-l\right)\Delta,\tilde{B}_{l\Delta}^{\left(m,N\right)},\tilde{B}_{N\Delta}^{\left(m,N\right)}\right)\right]\frac{\Delta}{2}\\
+\left[\tilde{\eta}^{\left(m\right)}\left(0,\tilde{B}_{l\Delta}^{\left(m,N\right)},\tilde{B}_{l\Delta}^{\left(m,N\right)}\right)+\tilde{\eta}^{\left(m\right)}\left(\left(N-l\right)\Delta,\tilde{B}_{l\Delta}^{\left(m,N\right)},\tilde{B}_{N\Delta}^{\left(m,N\right)}\right)\right]\frac{\Delta}{2}\\
+\sum_{q=l+1}^{N-l}\tilde{\eta}^{\left(m\right)}\left(\left(q-l\right)\Delta,\tilde{B}_{l\Delta}^{\left(m,N\right)},\tilde{B}_{q\Delta}^{\left(m,N\right)}\right)\Delta\label{eq:ao_jump_num}
\end{multline}
where $\tilde{\epsilon}^{\left(m\right)}$ is defined as in (\ref{eq:ao_num_e}),
and $\tilde{\eta}^{\left(m\right)}$ is defined by
\begin{multline}
\tilde{\eta}^{\left(m\right)}\left(s\Delta,\tilde{B}_{t}^{\left(m\right)},\tilde{B}_{t+s\Delta}^{\left(m\right)}\right)\equiv\\
\int_{0}^{\tilde{B}_{t+s\Delta-}^{\left(m\right)}}\int_{\tilde{B}_{t+s\Delta}^{\left(m\right)}}^{\infty}e^{-rs\Delta}\rho\left(\tilde{P}^{\left(m\right)}\left(s\Delta,J_{t+s\Delta}S_{t+s\Delta s-}\right)-\left(K-J_{t+s\Delta}S_{t+s\Delta s-}\right)\right)\\
\times\tilde{\psi}^{\left(m\right)}\left(S_{t+s\Delta-};\tilde{B}_{t}^{\left(m\right)}\right)\tilde{\psi}^{\left(m\right)}\left(J_{t+s\Delta}S_{t+s\Delta-};S_{t+s\Delta-}\right)d\left(J_{t+s\Delta}S_{t+s\Delta-}\right)dS_{t+s\Delta-}\label{eq:ao_num_g}
\end{multline}

We run the above procedure recursively, and obtain the exercise
boundary $\left\{ \tilde{B}_{n\Delta}^{\left(m,N\right)},n=0,\ldots,N\right\} $.
Finally, the value of the American put option can be computed by substituting the exercise
boundary into (\ref{eq:ao_appr}) for diffusion models or (\ref{eq:ao_jump_appr})
for jump-diffusion models.

\section{Applications}\label{section:ao_5}

In this section, we illustrate how to compute the value of the American put option and the corresponding exercise boundary for diffusion models
and jump-diffusion models. We use these examples to illustrate
the accuracy and speed of the proposed approach. We approximate the transition density using $m=2$ for all the examples in this section.

\subsection{Applications to Diffusion Models}

\subsubsection{Geometric Brownian Motion (GBM) Model}

In the geometric Brownian Motion model, the stock price $S$ follows
\[
dS_{t}=\left(r-\delta\right)S_{t}dt+\sigma S_{t}dW_{t},
\]
 where $r$, $\delta$, and $\sigma$ are constants. To test the efficiency of our recursive algorithm in Section \ref{section:ao_2}, we
compare the results from our approach with those
from four widely used methods: the binomial method by Cox, Ross, and
Rubinstein (1979), the accelerated binomial methods by Breen (1991), the
finite difference method, and the analytical approximation by Geske
and Johnson (1984). We use the results from the binomial method with
10,000 time-steps as a benchmark to measure the accuracy. Following
Huang \emph{et al.} (1996) and Geske and Johnson (1984), we
set $S_{0}=40$, $r=4.88\%$, and $\delta=0$.

Table \ref{gbm} reports the valuations of American options from the six approaches. Columns 1 through 3 represent the values of the parameters, $K$ (strike price), $\sigma$ (volatility), and $T$ (maturity), respectively.
Column 4 gives the numerical results from the binomial method with
10,000 time-steps, and we take this approach as a benchmark. Column
5 includes the results in Table I of Geske and Johnson (1984). Columns
6 through 8 report the results from the binomial method with 150 time-steps, the finite difference method with 200 steps, and the accelerated
binomial method with 150 time-steps. Column 9 shows the results
of the proposed approach with 100 time-steps. The accuracy is measured by the root mean squared error, as shown in the last row. It is clear from this table that the proposed approach achieves the best performance in terms of accuracy compared with the other methods.

\begin{table}[H]
\centering
\caption{Value of American Options Based on Different Numerical Methods}
\renewcommand{\arraystretch}{1.2}
\begin{threeparttable}
\begin{tabular}{|c|c|c|c|c|c|c|c|c|}
\hline 
\emph{K} & $\sigma$ & \emph{T }(yr) & Binomial & G\&J & Binomial II & Accelerated & FD & Hermite\tabularnewline
\hline 
\hline 
35 & 0.2 & 0.0833 & 0.0062 & 0.0062 & 0.0061 & 0.0061 & 0.0278 & 0.0062\tabularnewline
\hline 
35 & 0.2 & 0.3333 & 0.2004 & 0.1999 & 0.1995 & 0.1994 & 0.2382 & 0.2004\tabularnewline
\hline 
35 & 0.2 & 0.5833 & 0.4328 & 0.4321 & 0.4340 & 0.4331 & 0.4624 & 0.4329\tabularnewline
\hline 
40 & 0.2 & 0.0833 & 0.8522 & 0.8528 & 0.8512 & 0.8517 & 0.9874 & 0.8523\tabularnewline
\hline 
40 & 0.2 & 0.3333 & 1.5798 & 1.5807 & 1.5783 & 1.5752 & 1.6244 & 1.5800\tabularnewline
\hline 
40 & 0.2 & 0.5833 & 1.9904 & 1.9905 & 1.9886 & 1.9856 & 2.0177 & 1.9906\tabularnewline
\hline 
45 & 0.2 & 0.0833 & 5.0000 & 4.9985 & 5.0000 & 4.9200 & 5.0052 & 5.0000\tabularnewline
\hline 
45 & 0.2 & 0.3333 & 5.0883 & 5.0951 & 5.0886 & 4.9253 & 5.1327 & 5.0886\tabularnewline
\hline 
45 & 0.2 & 0.5833 & 5.2670 & 5.2719 & 5.2677 & 5.2844 & 5.2699 & 5.2673\tabularnewline
\hline 
35 & 0.3 & 0.0833 & 0.0774 & 0.0744 & 0.0775 & 0.0772 & 0.1216 & 0.0774\tabularnewline
\hline 
35 & 0.3 & 0.3333 & 0.6975 & 0.6969 & 0.6993 & 0.6977 & 0.7300 & 0.6976\tabularnewline
\hline 
35 & 0.3 & 0.5833 & 1.2198 & 1.2194 & 1.2239 & 1.2218 & 1.2407 & 1.2199\tabularnewline
\hline 
40 & 0.3 & 0.0833 & 1.3099 & 1.3100 & 1.3083 & 1.3095 & 1.3860 & 1.3100\tabularnewline
\hline 
40 & 0.3 & 0.3333 & 2.4825 & 2.4817 & 2.4799 & 2.4781 & 2.5068 & 2.4828\tabularnewline
\hline 
40 & 0.3 & 0.5833 & 3.1696 & 3.1733 & 3.1665 & 3.1622 & 3.1819 & 3.1699\tabularnewline
\hline 
45 & 0.3 & 0.0833 & 5.0597 & 5.0599 & 5.0600 & 5.0632 & 5.1016 & 5.0598\tabularnewline
\hline 
45 & 0.3 & 0.3333 & 5.7056 & 5.7012 & 5.7065 & 5.6978 & 5.7193 & 5.7059\tabularnewline
\hline 
45 & 0.3 & 0.5833 & 6.2436 & 6.2365 & 6.2448 & 6.2395 & 6.2477 & 6.2440\tabularnewline
\hline 
35 & 0.4 & 0.0833 & 0.2466 & 0.2466 & 0.2454 & 0.2456 & 0.2949 & 0.2466\tabularnewline
\hline 
35 & 0.4 & 0.3333 & 1.3460 & 1.3450 & 1.3505 & 1.3481 & 1.3696 & 1.3461\tabularnewline
\hline 
35 & 0.4 & 0.5833 & 2.1549 & 2.1568 & 2.1602 & 2.1569 & 2.1676 & 2.1551\tabularnewline
\hline 
40 & 0.4 & 0.0833 & 1.7681 & 1.7679 & 1.7658 & 1.7674 & 1.8198 & 1.7683\tabularnewline
\hline 
40 & 0.4 & 0.3333 & 3.3874 & 3.3632 & 3.3835 & 3.3863 & 3.4011 & 3.3877\tabularnewline
\hline 
40 & 0.4 & 0.5833 & 4.3526 & 4.3556 & 4.3480 & 4.3426 & 4.3567 & 4.3530\tabularnewline
\hline 
45 & 0.4 & 0.0833 & 5.2868 & 5.2855 & 5.2875 & 5.2863 & 5.3289 & 5.2870\tabularnewline
\hline 
45 & 0.4 & 0.3333 & 6.5099 & 6.5093 & 6.5103 & 6.5054 & 6.5147 & 6.5101\tabularnewline
\hline 
45 & 0.4 & 0.5833 & 7.3830 & 7.3831 & 7.3897 & 7.3785 & 7.3792 & 7.3833\tabularnewline
\hline 
\multicolumn{3}{|c|}{RMSE} & 0.0000 & 5.34e-03 & 2.64e-03 & 3.53e-02 & 4.10e-02 & 2.16e-04\tabularnewline
\hline 
\end{tabular}

\end{threeparttable}

\label{gbm}
\end{table}

We report the approximated exercise boundary in Figure \ref{boundarygbm1} for several combinations of strike prices and volatility. The parameters values are the same as those in Table \ref{gbm}, and $T$ is 0.5833. This figure shows the marginal effect of strike prices and volatility on the exercise boundary. For example, as the volatility becomes smaller, the responding exercise boundary becomes flatter. The intuition for this result is that when the volatility is small, the return from withholding American options is limited; thus, the American put option will be exercised at a higher boundary instead of a lower one. This result can also be confirmed by checking the partial difference of the exercise boundary with respect to the volatility in (\ref{eq:ao_gbm_bdy}).

The GBM model is one of the limited cases in which we know the true transition density, and to examine the accuracy of our approximated exercise boundary, we plug in the true transition density into the numerical algorithm in Section \ref{section:ao_4}, and compare the results with our approximated exercise boundary. We report this comparison for different strike prices and volatility in Figure \ref{boundarygbm2}.

In Figure \ref{fdvsh}, we compare the results of our proposed approach with the finite difference method for approximating the exercise boundary.

To further investigate the performance of our approach, we report the approximated value of the American put option with respect to strike prices from 10 to 70 in Figure \ref{strike}. In addition, we computed the approximated value of the American put option for different strikes based on various orders of the Hermite polynomial-based approximation of the transition density. More specifically, the first order approximation means $m=1$; the second order means $m=2$; the third order means $m=3$. We use the results from the binomial method as a benchmark for comparison, and report the relative error of the approximation in Figure \ref{strikeerror}.

\subsubsection{Constant Elasticity Volatility (CEV) Model}

The Constant Elasticity Volatility model assumes that the stock price
$S$ follows
\[
dS_{t}=\left(r-\delta\right)S_{t}dt+\sigma S_{t}^{\alpha/2}dW_{t},
\]
where $r$, $\delta$, $\sigma$, and $\alpha$ are constants. Detemple
and Kitapbayev (2018) applied this model to study the pricing of the American
VIX option. Further extension of this model on the valuation of the VIX
option can be found in Goard and Mazur (2013).

We set $K=100$, $r=6/100$, $\delta=r/2$, $\sigma=\sqrt{10}/5$, $S_{0}=40$, and $T=1$. We report in Figure \ref{boundarycev} the approximated exercise boundary of the CEV model for $\alpha=1.9$, and $\alpha=1.7$, respectively. From Figure \ref{boundarycev}, we find that as $\alpha$ decreases, the volatility of the stock price decreases, and thus the optimal exercise boundary becomes higher. This result is the same as that found in the GBM model.

\subsubsection{Nonlinear Mean Reversion (NMR) Model}

The Nonlinear Mean Reversion model assumes that 
\[
dS_{t}=\left(\frac{a}{S_{t}}+b+cS_{t}+vS_{t}^{2}\right)dt+\sigma S_{t}^{\gamma}dW_{t},
\]
where $a$, $b$, $c$, $v$, $\sigma$, and $\gamma$ are constants.
This model was discussed in Ait-Sahalia (1996, 1999), and Gallant
and Tauchen (1998) for modeling the interest rates. Eraker and Wang (2012)
proposed a similar model for the VIX option.

In the NMR model, we set $a=500$, $b=5$, $c=0.05$, $v=-0.05$, $\sigma=0.2$, $\gamma=3/2$, $K=20$, $r=5/100$, $\delta=0$, $S_{0}=20$, and $T=0.0833$. We report the approximated exercise boundary shown in Figure \ref{nmr2}.

\subsubsection{Double Mean Reversion (DMR) Model}

The Double Mean Reversion model assumes that 
\[
dS_{t}=\beta\left(y_{t}-S_{t}\right)dt+\sigma\sqrt{S_{t}}dW_{t}
\]
\[
dy_{t}=\xi\left(\alpha-y_{t}\right)dt+\kappa\sqrt{y_{t}}dU_{t}
\]
where $W$ and $U$ are two independent Brownian motions, and $\alpha$,
$\beta$, $\xi$, $\kappa$, and $\sigma$ are constants.

Based on the usual square root model, this DMR model includes an additional
stochastic factor for the mean level of the stock price. In this model,
the speed of mean-reversion towards the short-run stochastic mean
level of the stock price is controlled by $\beta$, and the speed of
mean-reversion towards the long-run mean level of the short-run stochastic
mean is controlled by $\xi$. This model was discussed in Amengual (2008),
Mencia and Sentana (2009), and Egloff \emph{et al.} (2010).

In the DMR model, the optimal exercise boundary is a function of time, $t$, and $y$. We set $K=40/100$, $r=4.88/100$, $\delta=0$, $\sigma=0.25$, $\kappa=0.2$, $\beta=2.5$, $\xi=4$, $\alpha=0.25$, $T=0.5$.
In Figure \ref{dmr}, we report the approximated exercise boundary in the DMR model. The boundary is approximated with 20 steps on time, and 100 steps on $y$ for $y$ in $[0,1]$.

\subsection{Applications to Jump-Diffusion Models}

\subsubsection{Merton's Jump-Diffusion Model}

Merton (1976) proposed a jump-diffusion model to incorporate discontinuous
returns, and derived a closed-form vanilla option pricing formula. Merton's jump-diffusion model assumes that:
\[
d\log S_{t}=\left(r-\delta-\lambda j\right)dt+\sigma dW_{t}+\left(J-1\right)dq_{t},
\]
where $dq$ is a Poisson process with rate $\lambda t$, $J$ has a lognormal
distribution with mean $\mu_{J}$ and variance $\sigma_{J}^{2}$,
and $j=E\left[J-1\right]=\exp\left(\mu_{J}+\sigma_{J}^{2}/2\right)-1$.

We set $K=40$, $r=4.88/100$, $\sigma=0.2$, $\mu_{J}=0$, $\sigma_{J}=0.2$, $S_{0}=40$, and $T=0.5$. Additionally, we approximate the exercise boundary for different values of $\lambda$. More specifically, we try $\lambda=1/100, 10/100,$ and $25/100$, and present the results in Figure \ref{merton}. By comparing the exercise boundaries in Figure \ref{merton}, we find that when $\lambda$ is smaller, the exercise boundary is higher. The intuition for this result is that when $\lambda$ is smaller, the jump in the return occurs less frequently, and thus the return becomes less volatile. Similar to the models without jumps in this section, when the stock price or the return is less volatile, the exercise boundary becomes higher.

\subsubsection{Kou's Jump-Diffusion Model}

To incorporate the leptokurtic feature of the return distribution
and ``volatility smile'' phenomenon in option market, Kou (2002)
proposed a double exponential jump-diffusion model. This model assumes
\[
d\log S_{t}=\left(r-\delta\right)dt+\sigma dW_{t}+Jdq_{t}
\]
where $J$ has an asymmetric double exponential distribution with
density:

\[
\upsilon\left(z\right)=p*\eta_{1}e^{-\eta_{1}z}\mathbf{1}_{\left\{ z\geq0\right\} }+q*\eta_{2}e^{-\eta_{2}z}\mathbf{1}_{\left\{ z<0\right\} },
\]
where $\eta_{1}>1$, $\eta_{2}>0$, $p+q=1$, and $0\leq p,q\leq1$.
The mean, variance, and skewness of the jump size in log returns are:
\[
\varphi_{1}=\frac{p}{\eta_{1}}-\frac{q}{\eta_{2}},
\]
\[
\varphi_{2}=pq\left(\frac{1}{\eta_{1}}+\frac{1}{\eta_{2}}\right)^{2}+\frac{p}{\eta_{1}^{2}}+\frac{q}{\eta_{2}^{2}},
\]
\[
\varphi_{3}=\frac{2\left(p^{3}-1\right)\eta_{1}^{3}-2\left(q^{3}-1\right)\eta_{2}^{3}+6pq\eta_{1}\eta_{2}\left(q\eta_{2}-p\eta_{1}\right)}{\left(p\eta_{2}^{2}+q\eta_{1}^{2}+pq\left(\eta_{1}+\eta_{2}\right)^{2}\right)^{\frac{3}{2}}}.
\]
Also, $dq$ is a Poisson process with rate $\lambda t$.

In Figure \ref{kou}, we report the approximated exercise boundary for Kou's jump-diffusion model with $\lambda=1/100, 10/100,$ and $20/100$, respectively. We set $K=40$, $r=4.88/100$, $\delta=0$, $\sigma=0.2$, $p=0.04$, $q=0.96$, $\eta_{1}=3.7$, $\eta_{2}=1.8$, $S_{0}=40$, and $T=0.5$. We approximate the boundary by 50 steps on time. Similarly, we find that the smaller the intensity of the jump is, the higher the exercise boundary becomes.

\section{Conclusion}\label{section:ao_6}

In this study, we develop a new approach to approximate the exercise boundary and the value of the American put option based on Hermite polynomials. We also provide a numerical scheme for implementing the proposed approach. We show theoretically that our approximation will converge to the true exercise boundary and the value of the American put option, and provide evidence for the efficiency of our approach through several numerical examples including diffusion processes and jump-diffusion processes. We only discuss the case of the American put option; however, the value of the American call option can be approximated similarly.

A drawback of our approach is its computational complexity. Although we have a closed-form approximation of the transition density for a given jump-diffusion model, we need to evaluate the integral of the transition density, and the integral usually does not admit a closed-form solution. This incurs a heavy computational burden on the numerical implementation. Other approaches for approximating the transition density, such as finite mixture models, can simplify the integral equation, and thus reduce the computational complexity. We leave this to be explored in future research.

\clearpage

\newpage
\appendix
\setcounter{table}{0}
\setcounter{figure}{0}
\renewcommand{\thetable}{A\arabic{table}}
\renewcommand{\thefigure}{A\arabic{figure}}

\clearpage
\section{Assumptions}

\begin{assumption}[Smoothness of Coefficients]\label{ass_ao_1}
The functions $r\left(S_{t};\theta\right)$, $\delta\left(S_{t};\theta\right)$
and $\sigma\left(S_{t};\theta\right)$ are infinitely differentiable
in $S$, and three times continuously differentiable in $\theta$,
for all $S\in D_{S}$ and $\theta\in\varTheta$.

\end{assumption}

\begin{assumption}[Non-Degeneracy of the Diffusion]\label{ass_ao_2}
\begin{enumerate}

\item If $D_{S}=\left(-\infty,+\infty\right)$, there exists a constant
$c$ such that $\sigma\left(S_{t};\theta\right)>c>0$ for all $S\in D_{S}$
and $\theta\in\varTheta$.
\item If $D_{S}=\left(0,+\infty\right)$, there exists constants $\zeta_{0}>0$,
$\omega>0$, $\eta\geq0$ such that $\sigma\left(S_{t};\theta\right)\geq\omega S^{\eta}$
for all $0<S\leq\zeta_{0}$ and $\theta\in\varTheta$.
\end{enumerate}

\end{assumption}

\begin{assumption}[Boundary Behavior]\label{ass_ao_3}
For all $\theta\in\varTheta$, $\mu_{Y}\left(y;\theta\right)$ in
(\ref{eq:herm_miuy}) and its derivatives with respect to $y$ and $\theta$
have at most polynomial growth near the boundaries and $\lim_{y\rightarrow\underline{y}^{+}\mathrm{or}\bar{y}^{-}}\lambda\left(y;\theta\right)<+\infty$
where $\lambda\left(y;\theta\right)\equiv-\left(\mu_{Y}^{2}\left(y;\theta\right)+\partial\mu_{Y}\left(y;\theta\right)/\partial y\right)/2$.
\begin{enumerate}
\item Left Boundary: If $\underline{y}=0$, there exist constants $\varepsilon_{0}$,
$\chi$, $\varsigma$ such that for all $0<y\leq\varepsilon_{0}$
and $\theta\in\varTheta$, $\mu_{Y}\left(y;\theta\right)\geq\chi y^{-\varsigma}$
where either $\varsigma>1$ and $\chi>0$, or $\varsigma=1$ and $\chi\geq1$.
If $\underline{y}=-\infty$, there exist constants $E_{0}>0$ and
$K_{0}>0$ such that for all $y\leq-E_{0}$ and $\theta\in\varTheta$,
$\mu_{Y}\left(y;\theta\right)\geq K_{0}y$.
\item Right Boundary: If $\bar{y}=+\infty$, there exist constants $E_{0}>0$
and $K_{0}>0$ such that for all $y\geq E_{0}$ and $\theta\in\varTheta$,
$\mu_{Y}\left(y;\theta\right)\leq K_{0}y$. If $\bar{y}=0$, there
exist constants $\varepsilon_{0}$, $\chi$, $\varsigma$ such that
for all $0>y\geq-\varepsilon_{0}$ and $\theta\in\varTheta$, $\mu_{Y}\left(y;\theta\right)\leq-\chi|y|^{-\varsigma}$
where either $\varsigma>1$ and $\chi>0$, or $\varsigma=1$ and $\chi\geq1/2$.
\end{enumerate}

\end{assumption}

\section{Proof}
\subsection{Proof of Theorem 2}

\textbf{Step 1:} According to Ait-Sahalia (2002), we apply the following transform
to $S$ by $S\rightarrow Y\rightarrow Z$:
\[
Y\equiv\gamma\left(S\right)=\int^{S}du/\sigma\left(u\right),
\]
and
\[
Z\equiv\varDelta^{-\frac{1}{2}}\left(Y-y_{0}\right).
\]
We approximate the transition density of $Z$ by the following Hermite
polynomial construction up to order $m$:
\[
\tilde{\psi}_{Z}^{\left(m\right)}\left(Z_{t+\varDelta}=z^{\prime};Z_{t}=z\right)\equiv\phi\left(z^{\prime}\right)\sum_{j=0}^{m}\eta_{Z}^{\left(j\right)}\left(\varDelta,z;\theta\right)H_{j}\left(z\right),
\]
where $\phi$ is the density of standard normal distribution, $H$
is the Hermite polynomials and $\eta_{Z}$ is the coefficient in the
approximation. We have 
\begin{multline*}
|\eta_{Z}^{\left(j\right)}\left(\varDelta,z;\theta\right)H_{j}\left(z\right)|\leq Q\left\{ 1+|z^{5/2}/2^{5/4}|\right\} e^{z^{2}/4}\times\\
\left\{ j^{-1/2}\left(j+1\right)^{-1}+\left(j+1\right)!v_{j+1}^{2}\left(\varDelta,z\right)\right\} /2,
\end{multline*}
where 
\[
v_{j+1}\left(\varDelta,z\right)=\left(j!\right)^{-1}\int_{-\infty}^{+\infty}H_{j}\left(w\right)\left\{ \frac{\partial p_{Z}\left(\varDelta,w|z;\theta\right)}{\partial w}\right\} dw,
\]
and $Q$ is a constant. $p_{Z}\left(\varDelta,z^{\prime}|z;\theta\right)$
is defined as the true transition density of $Z$. It is easy to verify
that $j^{-1/2}\left(j+1\right)^{-1}\leq\varrho$ (a constant), and
\begin{multline*}
\sum_{j=0}^{m}\left(j\right)!v_{j}^{2}\left(\varDelta,z\right)\leq\int_{-\infty}^{+\infty}e^{w^{2}/2}\left\{ \frac{\partial p_{Z}\left(\varDelta,w|z;\theta\right)}{\partial w}\right\} ^{2}dw\\
\leq\int_{-\infty}^{+\infty}e^{w^{2}/2}\left(b_{0}e^{-3w^{2}/8}R\left(|w|,|z|\right)e^{b_{1}|w||z|+b_{2}|w|+b_{3}|z|+b_{4}z^{2}}\right)dw
\end{multline*}
where $R$ is a polynomial of finite order in $\left(|w|,|z|\right)$
with coefficients uniform in $\theta\in\varTheta$, and where the
constants $b_{i},i=0,\ldots4$, are uniform in $\theta\in\varTheta$.

According to Lebesgue's Dominant Convergence Theorem (DCT), $\sum_{j=0}^{m}\left(j\right)!v_{j}^{2}\left(\varDelta,z\right)$
is convergent, and thus bounded.

Then,
\[
|\tilde{\psi}_{Z}^{\left(m\right)}|\leq\phi\left(z\right)\left\{ Q\left\{ 1+|z^{5/2}/2^{5/4}|\right\} e^{z^{2}/4}\varrho^{\prime}+\sum_{j=0}^{m}\left(j+1\right)!v_{j+1}^{2}\left(\varDelta,z\right)\right\} /2
\]
where $\varrho^{\prime}=m\varrho$.

Notice that $Q\left\{ 1+|z^{5/2}/2^{5/4}|\right\} e^{z^{2}/4}\varrho^{\prime}$
and $\sum_{j=0}^{m}\left(j+1\right)!v_{j+1}^{2}\left(\varDelta,z\right)$
are integrable. It follows from above that $\tilde{\psi}_{Z}^{\left(m\right)}$
is also integrable.

By our assumption, $\sigma$ is globally nondegenerate, that is, there
exists a constant $\xi$ such that $\sigma^{-1}\left(s\right)<\xi^{-1}<\infty$.
We can recover the transition density of $S$ from that of $Z$ by
\[
\tilde{\psi}_{S}^{\left(m\right)}\left(S^{\prime};S\right)=\sigma^{-1}\varDelta^{-1/2}\tilde{\psi}_{Z}^{\left(m\right)}\left(\varDelta^{-1/2}\left(\gamma\left(S^{\prime}\right)-\gamma\left(S\right)\right),\gamma\left(S\right)\right).
\]
It is easy to verify that $\tilde{\psi}_{S}^{\left(m\right)}$ is integrable.
From Ait-Sahalia (2002), we know $\tilde{\psi}_{S}^{\left(m\right)}$ is
convergent to the true transition density $\psi$ as $m\rightarrow\infty$.
Then, by DCT, we have the integral of $\tilde{\psi}_{S}^{\left(m\right)}$
will converge to the integral of $\psi$. That is $\tilde{p}_{0}^{\left(m\right)}\rightarrow p_{0}$
as $m\rightarrow\infty$. We complete the proof for the first part
of Theorem 3.2.

\textbf{Step 2:} By our construction in Subsection 3.2.3 for approximating the exercise
boundary, we have $\tilde{B}_{T}^{\left(m\right)}=B_{T}$ and $\tilde{B}_{T-}^{\left(m\right)}=B_{T-}$.
Now, we assume that $\tilde{B}_{s}^{\left(m\right)}=B_{s}$ for $s>t$,
then $\tilde{e}^{\left(m\right)}\left(t,\tilde{B}_{t}^{\left(m\right)},\tilde{B}^{\left(m\right)}\left(\cdot\right)\right)=\tilde{e}^{\left(m\right)}\left(t,\tilde{B}_{t}^{\left(m\right)},B\left(\cdot\right)\right)$.
Based on our analysis in Step 1, we know that $\tilde{e}^{\left(m\right)}$
and $\tilde{p}^{\left(m\right)}$ are well defined smooth function
of $\tilde{B}_{t}^{\left(m\right)}$.

Let $\tilde{F}^{\left(m\right)}\left(\tilde{B}_{t}^{\left(m\right)}\right)$
be a smooth function of $\tilde{B}_{t}^{\left(m\right)}$ such that
\[
\tilde{F}^{\left(m\right)}\left(\tilde{B}_{t}^{\left(m\right)}\right)\equiv\tilde{p}^{\left(m\right)}\left(t,\tilde{B}_{t}^{\left(m\right)}\right)+\tilde{e}^{\left(m\right)}\left(t,\tilde{B}_{t}^{\left(m\right)},B\left(\cdot\right)\right)+\tilde{B}_{t}^{\left(m\right)},
\]
and let $F\left(B_{t}\right)$ be a smooth function of $B_{t}$ such
that 
\[
F\left(B_{t}\right)\equiv p\left(t,B_{t}\right)+e\left(t,B_{t},B\left(\cdot\right)\right)+B_{t}.
\]
By this construction, we have $\tilde{B}_{t}^{\left(m\right)}=\left(\tilde{F}^{\left(m\right)}\right)^{-1}\left(K\right)$
and $B_{t}=F^{-1}\left(K\right)$ where we denote $\left(\tilde{F}^{\left(m\right)}\right)^{-1}$
as the inverse function of $\tilde{F}^{\left(m\right)}$.

Because $\tilde{\psi}_{S}^{\left(m\right)}\rightarrow\psi$ as $m\rightarrow\infty$,
we know $\tilde{p}^{\left(m\right)}\left(t,\cdot\right)\rightarrow p\left(t,\cdot\right)$
and $\tilde{e}^{\left(m\right)}\left(t,\cdot,B\left(\cdot\right)\right)\rightarrow e\left(t,\cdot,B\left(\cdot\right)\right)$
as $m\rightarrow\infty$. It follows that $\tilde{F}^{\left(m\right)}\rightarrow F$
and so does the inverse. This implies $\left(\tilde{F}^{\left(m\right)}\right)^{-1}\left(K\right)\rightarrow F^{-1}\left(K\right)$,
and thus $\tilde{B}_{t}^{\left(m\right)}\rightarrow B_{t}$ as $m\rightarrow\infty$.

Next, we assume that $\tilde{B}_{s}^{\left(m\right)}\rightarrow B_{s}$
for $s>t$. We denote $\delta_{t}^{\left(m\right)}=\tilde{B}_{t}^{\left(m\right)}-B_{t}$.
It is easy to verify that we still have $\tilde{p}^{\left(m\right)}\left(t,\cdot\right)\rightarrow p\left(t,\cdot\right)$
in this case. For $\tilde{e}^{\left(m\right)}\left(t,\tilde{B}_{t}^{\left(m\right)},\tilde{B}^{\left(m\right)}\left(\cdot\right)\right)$,
we have
\begin{multline*}
\tilde{e}^{\left(m\right)}\left(t,\tilde{B}_{t}^{\left(m\right)},\tilde{B}^{\left(m\right)}\left(\cdot\right)\right)=\int_{t}^{T}\int_{0}^{\tilde{B}_{s}^{\left(m\right)}}\left(rK-\delta S_{s}\right)e^{-r\left(s-t\right)}\tilde{\psi}^{\left(m\right)}\left(S_{s};S_{t}=\tilde{B}_{t}^{\left(m\right)}\right)dS_{s}ds\\
=\int_{t}^{T}\int_{0}^{B_{s}}\left(rK-\delta S_{s}\right)e^{-r\left(s-t\right)}\tilde{\psi}^{\left(m\right)}\left(S_{s};S_{t}=\tilde{B}_{t}^{\left(m\right)}\right)dS_{s}ds\\
+\int_{t}^{T}\int_{0}^{\delta_{t}^{\left(m\right)}}\left(rK-\delta S_{s}\right)e^{-r\left(s-t\right)}\tilde{\psi}^{\left(m\right)}\left(S_{s};S_{t}=\tilde{B}_{t}^{\left(m\right)}\right)dS_{s}ds
\end{multline*}
As $\tilde{\psi}^{\left(m\right)}$ is uniformly integrable, and $\delta_{t}^{\left(m\right)}\rightarrow0$,
we have 
\[
\int_{t}^{T}\int_{0}^{\delta_{t}^{\left(m\right)}}\left(rK-\delta S_{s}\right)e^{-r\left(s-t\right)}\tilde{\psi}^{\left(m\right)}\left(S_{s};S_{t}=\tilde{B}_{t}^{\left(m\right)}\right)dS_{s}ds\rightarrow0
\]
Then it follows that 
\begin{multline*}
\tilde{e}^{\left(m\right)}\left(t,\tilde{B}_{t}^{\left(m\right)},\tilde{B}^{\left(m\right)}\left(\cdot\right)\right)\rightarrow\int_{t}^{T}\int_{0}^{B_{s}}\left(rK-\delta S_{s}\right)e^{-r\left(s-t\right)}\tilde{\psi}^{\left(m\right)}\left(S_{s};S_{t}=\tilde{B}_{t}^{\left(m\right)}\right)dS_{s}ds\\
\rightarrow\int_{t}^{T}\int_{0}^{B_{s}}\left(rK-\delta S_{s}\right)e^{-r\left(s-t\right)}\psi\left(S_{s};S_{t}=\tilde{B}_{t}^{\left(m\right)}\right)dS_{s}ds=e\left(t,\tilde{B}_{t}^{\left(m\right)},B\left(\cdot\right)\right)\\
\end{multline*}

We define 
\[
\tilde{\mathcal{F}}^{\left(m\right)}\left(\tilde{B}_{t}^{\left(m\right)}\right)\equiv\tilde{p}^{\left(m\right)}\left(t,\tilde{B}_{t}^{\left(m\right)}\right)+\tilde{e}^{\left(m\right)}\left(t,\tilde{B}_{t}^{\left(m\right)},\tilde{B}^{\left(m\right)}\left(\cdot\right)\right)+\tilde{B}_{t}^{\left(m\right)}.
\]

Based on the above analysis, we know $\tilde{\mathcal{F}}^{\left(m\right)}\rightarrow F$
as $m\rightarrow\infty$, and thus $\left(\tilde{\mathcal{F}}^{\left(m\right)}\right)^{-1}\left(K\right)$
$\rightarrow F^{-1}\left(K\right)$.
This means $\tilde{B}_{t}^{\left(m\right)}\rightarrow B_{t}$ as $m\rightarrow\infty$
for any $t\in\left[0,T\right]$, which establishes the second part of
Theorem 3.2.

\textbf{Step 3:} As we already proved in Step 2, $\tilde{B}_{t}^{\left(m\right)}\rightarrow B_{t}$
as $m\rightarrow\infty$, and since $\tilde{\psi}^{\left(m\right)}$
is uniformly integrable, it is straightforward to have $\tilde{e}_{0}^{\left(m\right)}\rightarrow e_{0}$
as $m\rightarrow\infty$.

\textbf{Step 4:} It is elemental to prove that $\tilde{P}_{0}^{\left(m\right)}\rightarrow P_{0}$
as $m\rightarrow\infty$ based on our results in Steps 1 and 3.

\section{Figures}

\begin{sidewaysfigure}[h!]
\centering
\caption{Approximated Exercise Boundary for Different Strikes and Volatilities}
\includegraphics[scale=0.4]{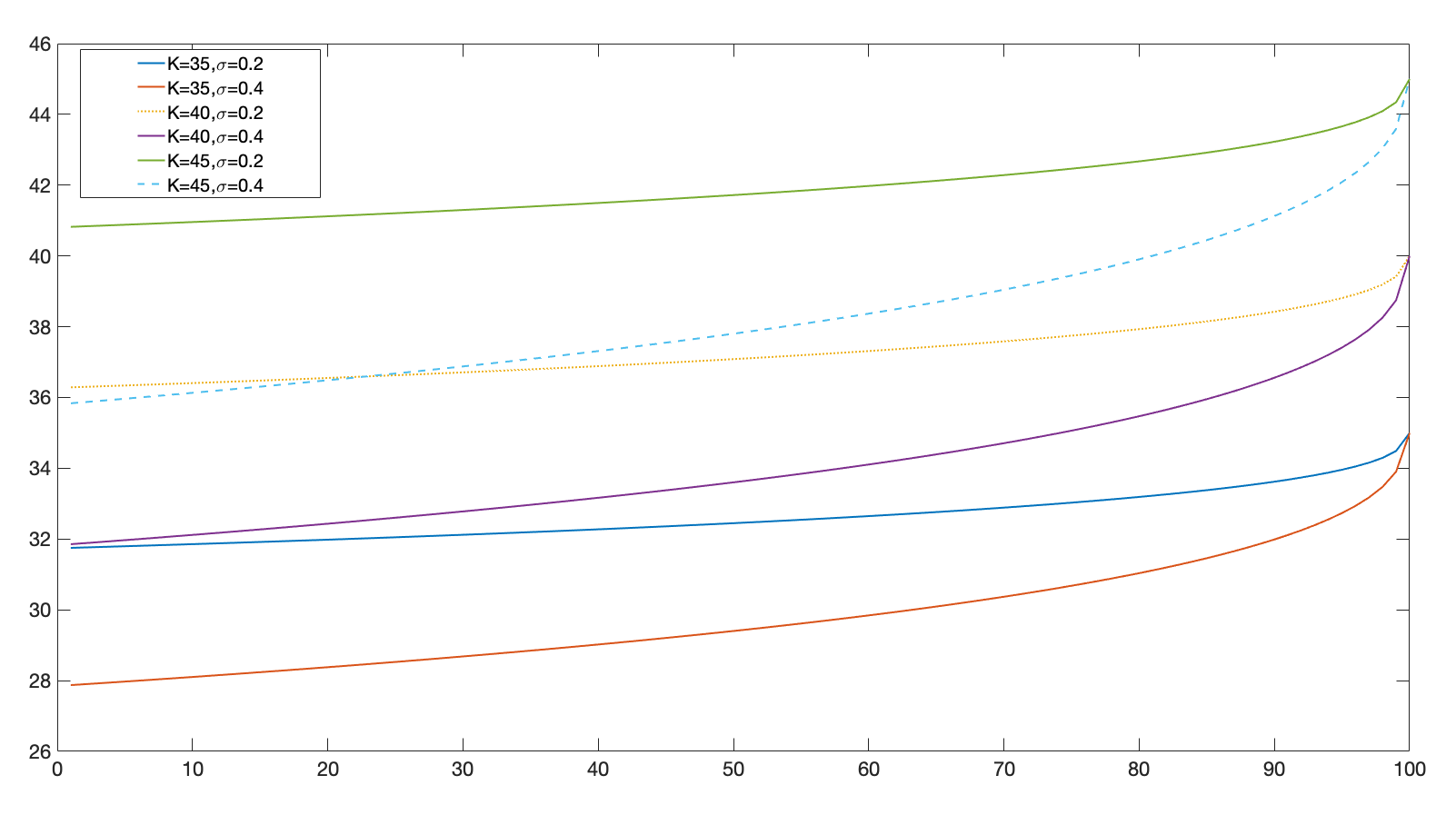}
\captionsetup{justification=raggedright,margin=0.45cm}
\caption*{\footnotesize{Note: The horizontal axis represents the 100 steps. That is, 100 in the horizontal axis means the time at maturity. The vertical axis represents the price.}}
\label{boundarygbm1}
\end{sidewaysfigure}

\begin{sidewaysfigure}[h!]
\centering
\caption{The Exercise Boundary: True vs. Approximation}
\includegraphics[scale=0.4]{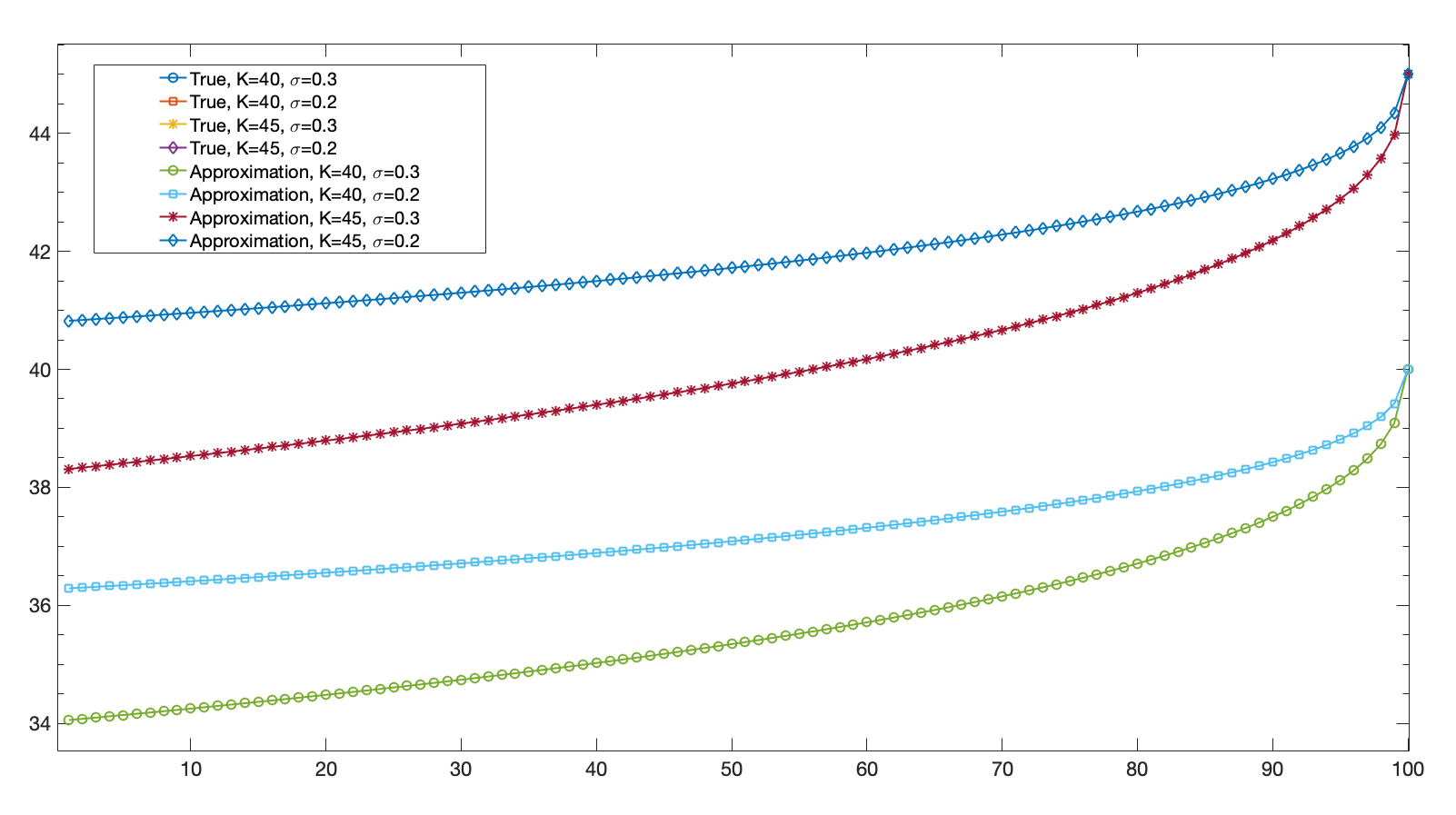}
\captionsetup{justification=raggedright,margin=0.45cm}
\caption*{\footnotesize{Note: The horizontal axis represents the 100 steps. That is, 100 in the horizontal axis means the time at maturity. The vertical axis represents the price. We use the same marker to represent the same set of parameter values. $\Diamond$ represents the boundary when $K=45$, and $\sigma=0.2$; $\star$ represents the boundary when $K=45$, and $\sigma=0.3$; $\Box$ represents the boundary when $K=40$, and $\sigma=0.2$;}; and $\circ$ represents the boundary when $K=40$, and $\sigma=0.3$;}
\label{boundarygbm2}
\end{sidewaysfigure}

\begin{sidewaysfigure}[h!]
\centering
\caption{The Exercise Boundary: Hermite polynomial Approximation vs. Finite Difference}
\includegraphics[scale=0.39]{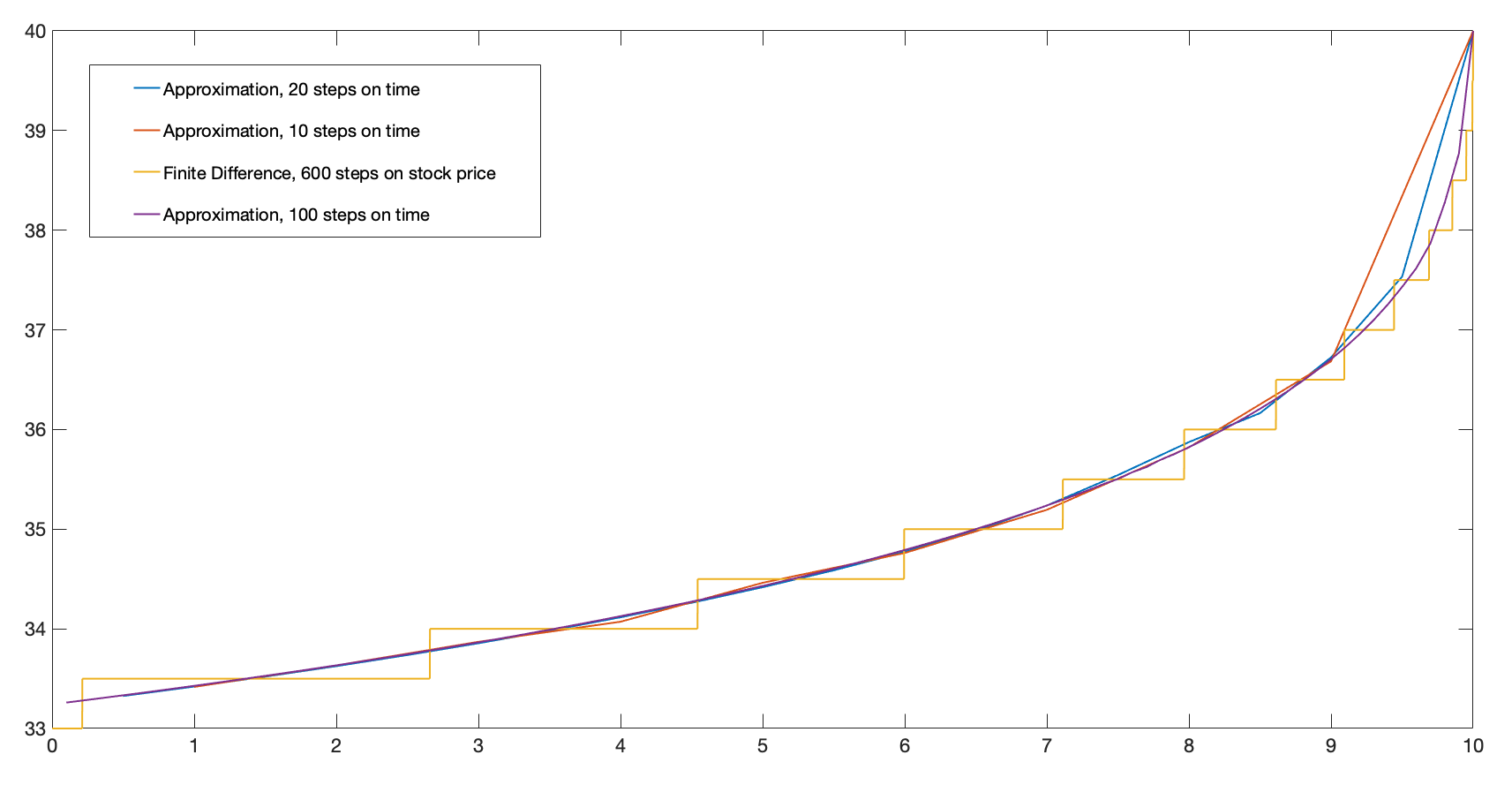}
\captionsetup{justification=raggedright,margin=0.45cm}
\caption*{\footnotesize{Note: The horizontal axis represents the 10 steps. That is, 10 in the horizontal axis means the time at maturity. The vertical axis represents the price. 1.932 seconds spent for our approach when we have 20 steps on time; 0.898 second spent when we have 10 steps on time; and 1.247 seconds spend for the finite difference approach when we have 600 steps on the support of stock price. We consider our approach with 100 steps on time as benchmark for comparison.}}
\label{fdvsh}
\end{sidewaysfigure}

\begin{figure}[h!]
\captionsetup{belowskip=12pt,aboveskip=4pt}
\caption{The Value of American Put and Strikes}
\begin{subfigure}{1\textwidth}
  \centering
  \captionsetup{justification=centering,margin=0.45cm}
    \caption{The value of American put with respect to different strikes for various orders of approximation}
  \includegraphics[scale=0.27]{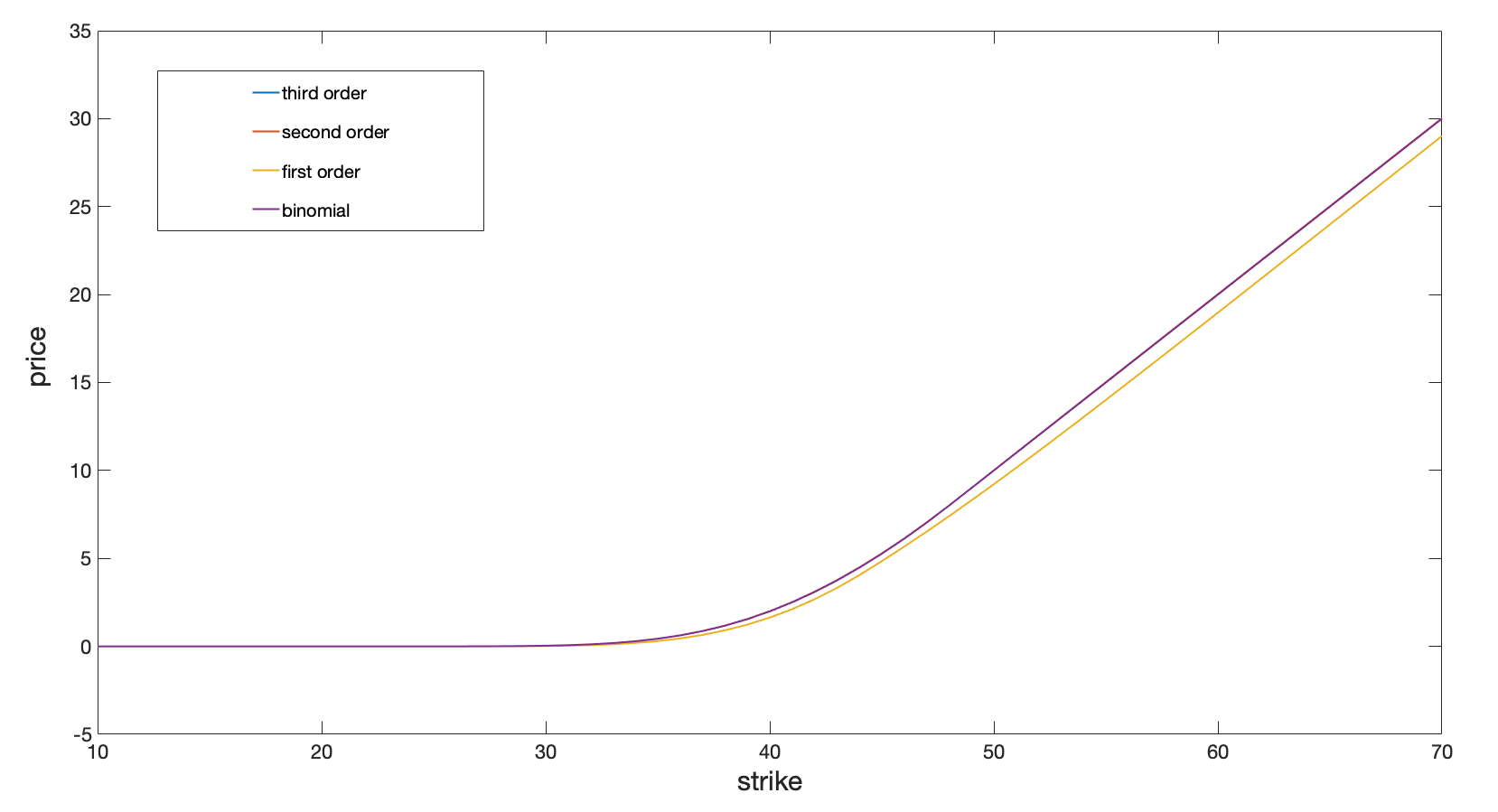}

  \label{strike}
\end{subfigure}
\begin{subfigure}{1\textwidth}
  \centering
  \captionsetup{justification=centering,margin=0.45cm}
    \caption{Approximation error of the American put value with respect to different strikes for various orders of approximation}
 \includegraphics[scale=0.27]{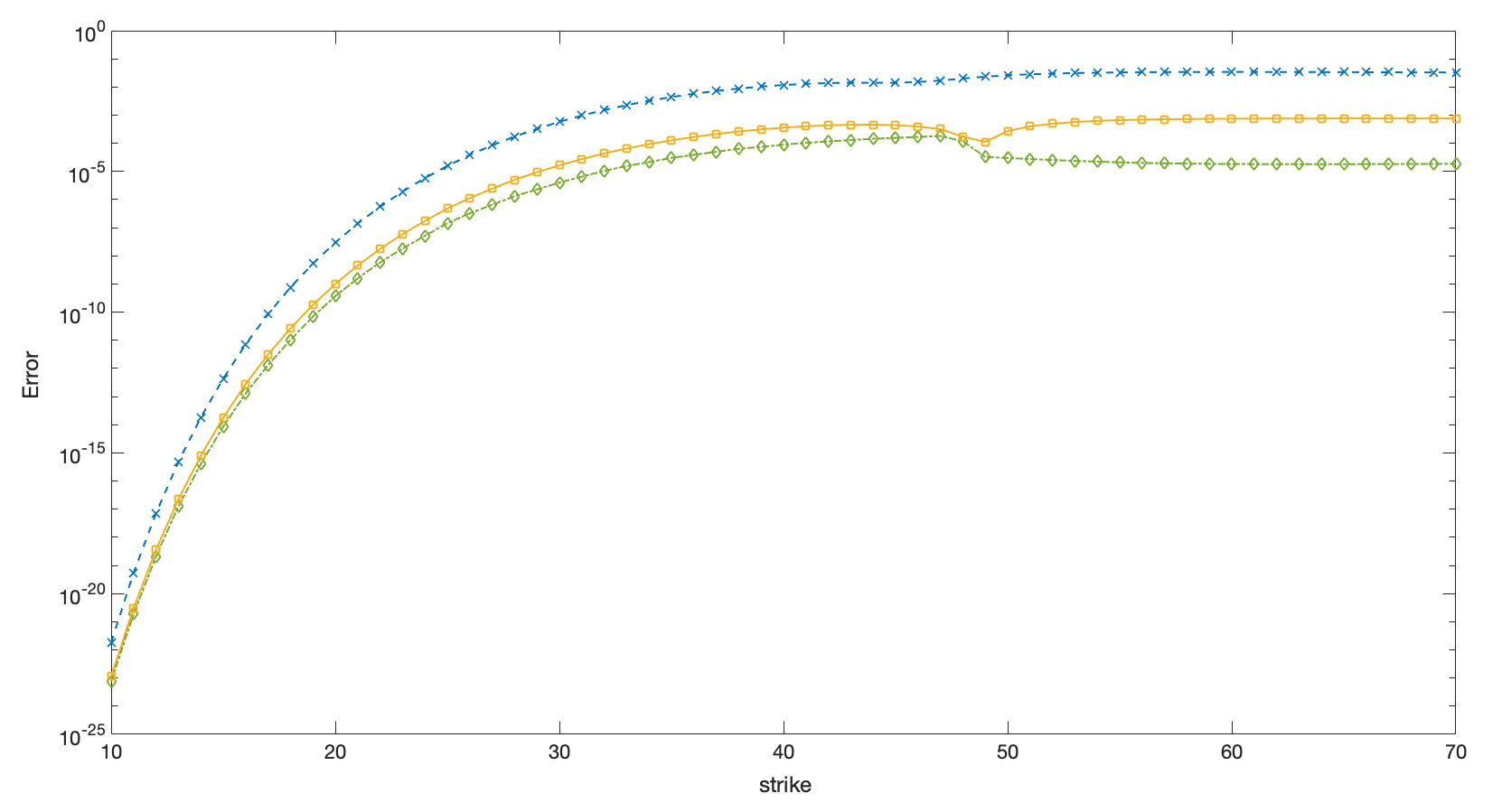}

  \label{strikeerror}
\end{subfigure}
\captionsetup{justification=raggedright,margin=0.45cm}
\caption*{\footnotesize{Note: (a) first order represents $m=1$ in the Hermite polynomial approximation of the transition density; second order represents $m=2$; third order represents $m=3$. (b) the blue curve is the relative error in approximating the price with first order accuracy of the approximation of the transition density; yellow curve is that for second order accuracy; green curve is that for third order accuracy}}
\label{strikeprice}
\end{figure}

\begin{sidewaysfigure}[h!]
\centering
\caption{The Exercise Boundary of American Put in the CEV Model}
\includegraphics[scale=0.4]{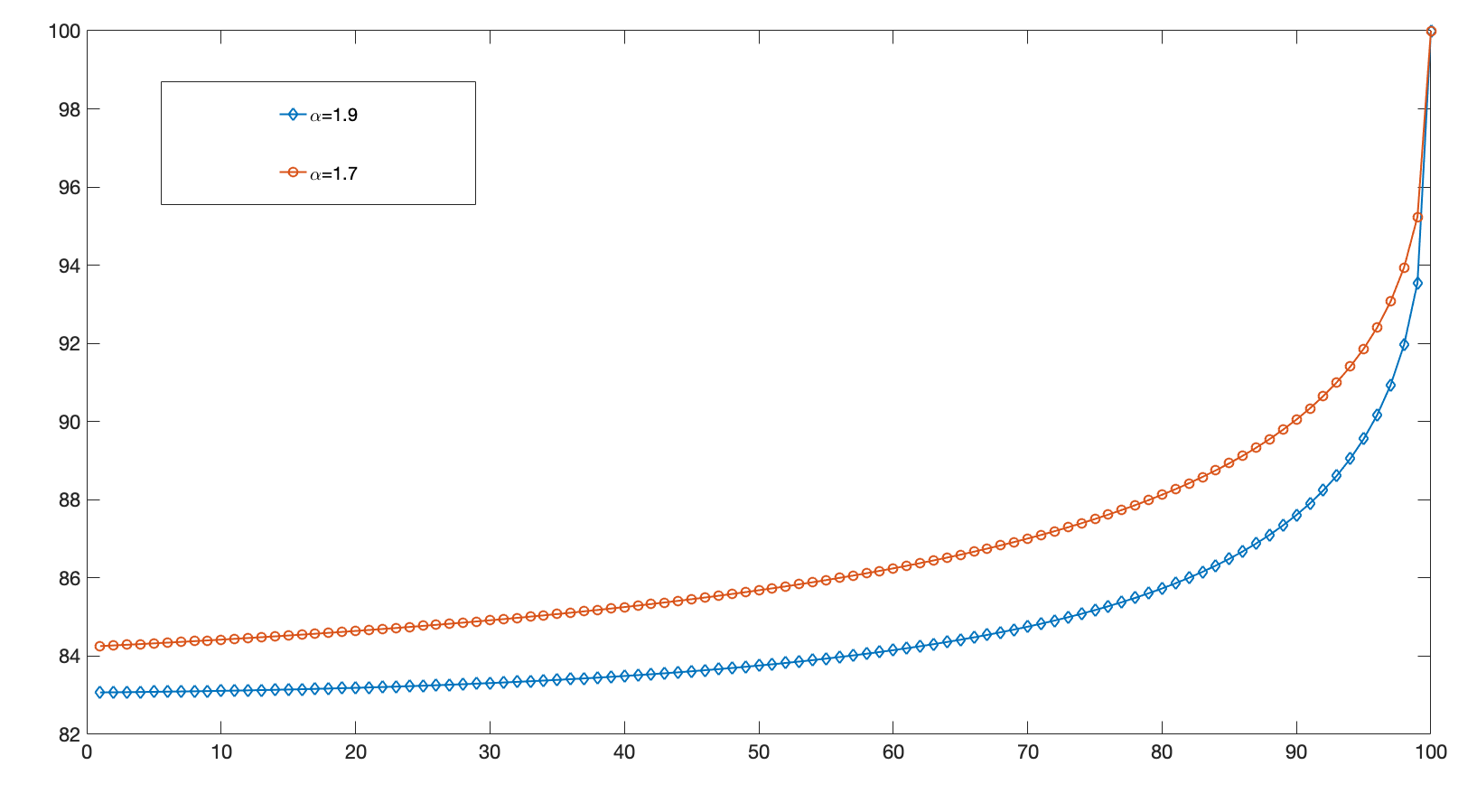}
\captionsetup{justification=raggedright,margin=0.45cm}
\caption*{\footnotesize{Note: The horizontal axis represents the 100 steps. That is, 100 in the horizontal axis means the time at maturity. The vertical axis represents the price. $K=100$, $r=6/100$, $\delta=r/2$, $\sigma=\sqrt{10}/5$, $S_{0}=40$, and $T=1$.}}
\label{boundarycev}
\end{sidewaysfigure}

\begin{sidewaysfigure}[h!]
\centering
\caption{The Exercise Boundary of American Put in the NMR Model}
\includegraphics[scale=0.4]{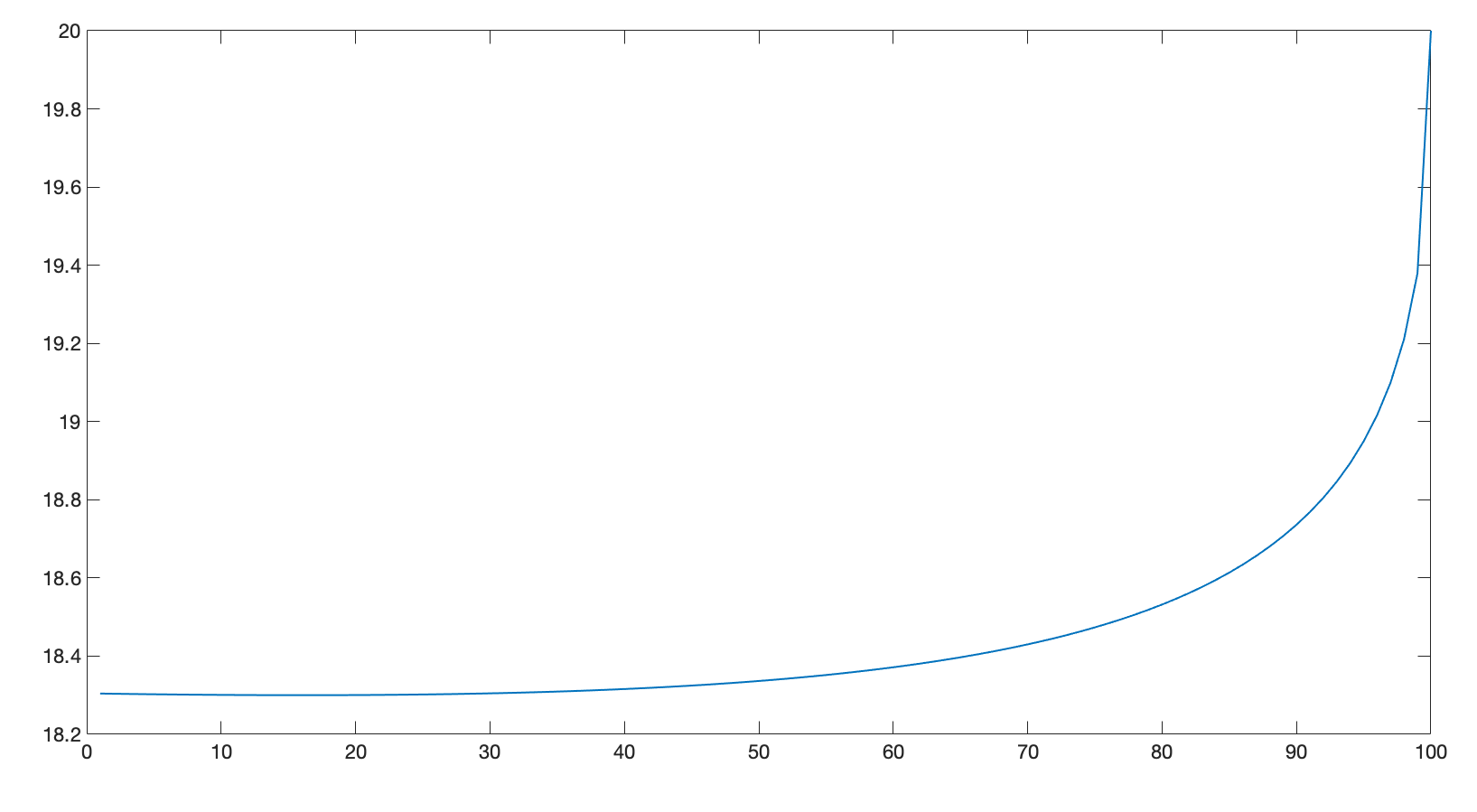}
\captionsetup{justification=raggedright,margin=0.45cm}
\caption*{\footnotesize{Note: The horizontal axis represents the 100 steps. That is, 100 in the horizontal axis means the time at maturity. The vertical axis represents the price. $a=500$, $b=5$, $c=0.05$, $v=-0.05$, $\sigma=0.2$, $\gamma=3/2$, $K=20$, $r=5/100$, $\delta=0$, $S_{0}=20$, and $T=0.0833$.}}
\label{nmr2}
\end{sidewaysfigure}

\begin{sidewaysfigure}[h!]
\centering
\caption{The Exercise Boundary of American Put in the DMR Model}
\includegraphics[scale=0.36]{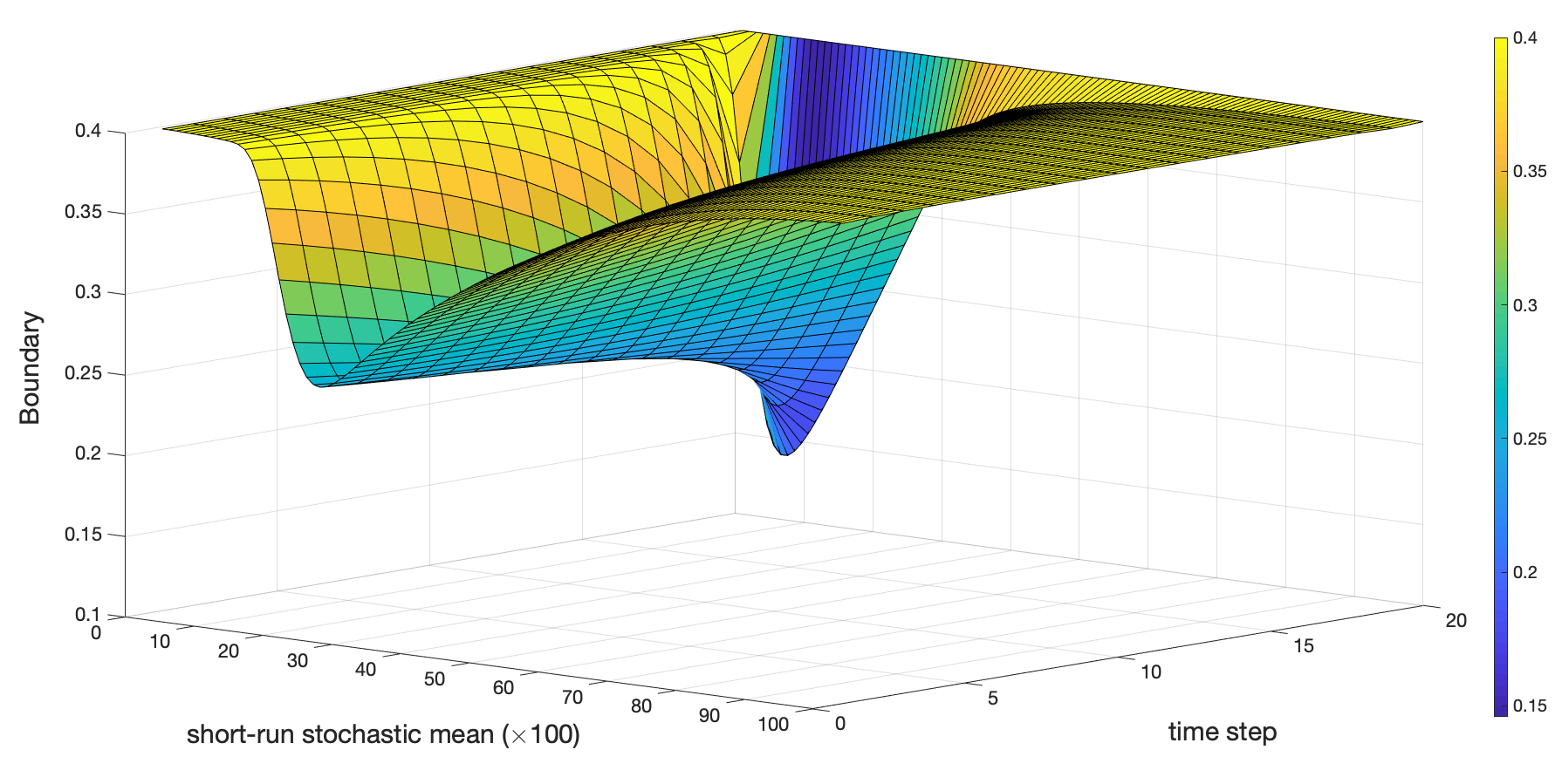}
\captionsetup{justification=raggedright,margin=0.45cm}
\caption*{\footnotesize{Note: The exercise boundary is approximated with 20 steps on time and 100 steps on $y$ for $y$ in $[0,1]$. $K=40/100$, $r=4.88/100$, $\delta=0$, $\sigma=0.25$, $\kappa=0.2$, $\beta=2.5$, $\xi=4$, $\alpha=0.25$, $T=0.5$}}
\label{dmr}
\end{sidewaysfigure}

\begin{sidewaysfigure}[h!]
\centering
\caption{The Exercise Boundary of American Put in Merton's Jump-Diffusion Model for $\lambda=1/100, 10/100,$ and $25/100$}
\includegraphics[scale=0.4]{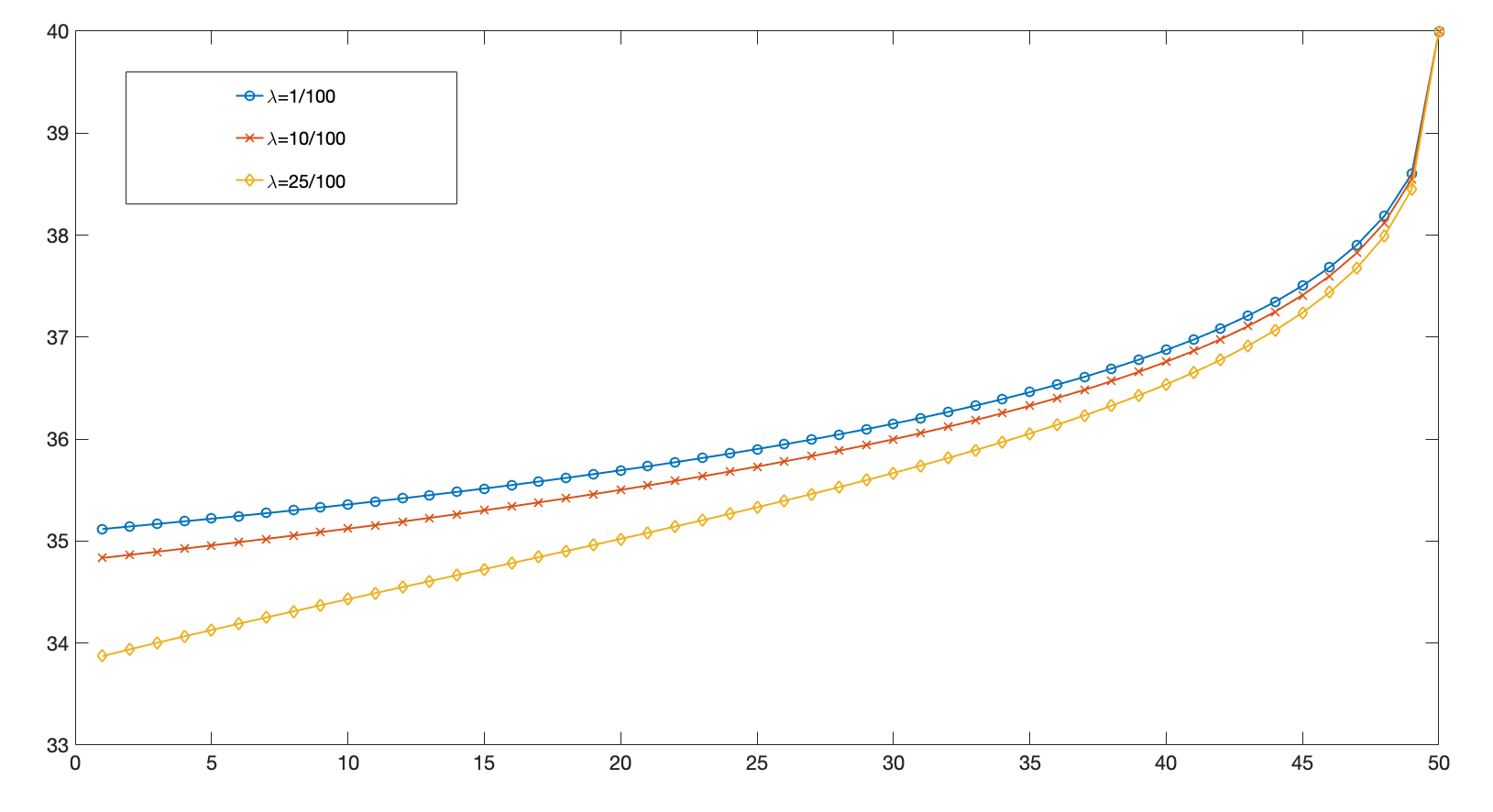}
\captionsetup{justification=raggedright,margin=0.45cm}
\caption*{\footnotesize{Note: The exercise boundary is approximated with 50 steps. $K=40$, $r=4.88/100$, $\sigma=0.2$, $\mu_{J}=0$, $\sigma_{J}=0.2$, $S_{0}=40$, and $T=0.5$.}}
\label{merton}
\end{sidewaysfigure}

\begin{sidewaysfigure}[h!]
\centering
\caption{The Exercise Boundary of American Put in Kou's Jump-Diffusion Model for $\lambda=1/100, 10/100,$ and $20/100$}
\includegraphics[scale=0.4]{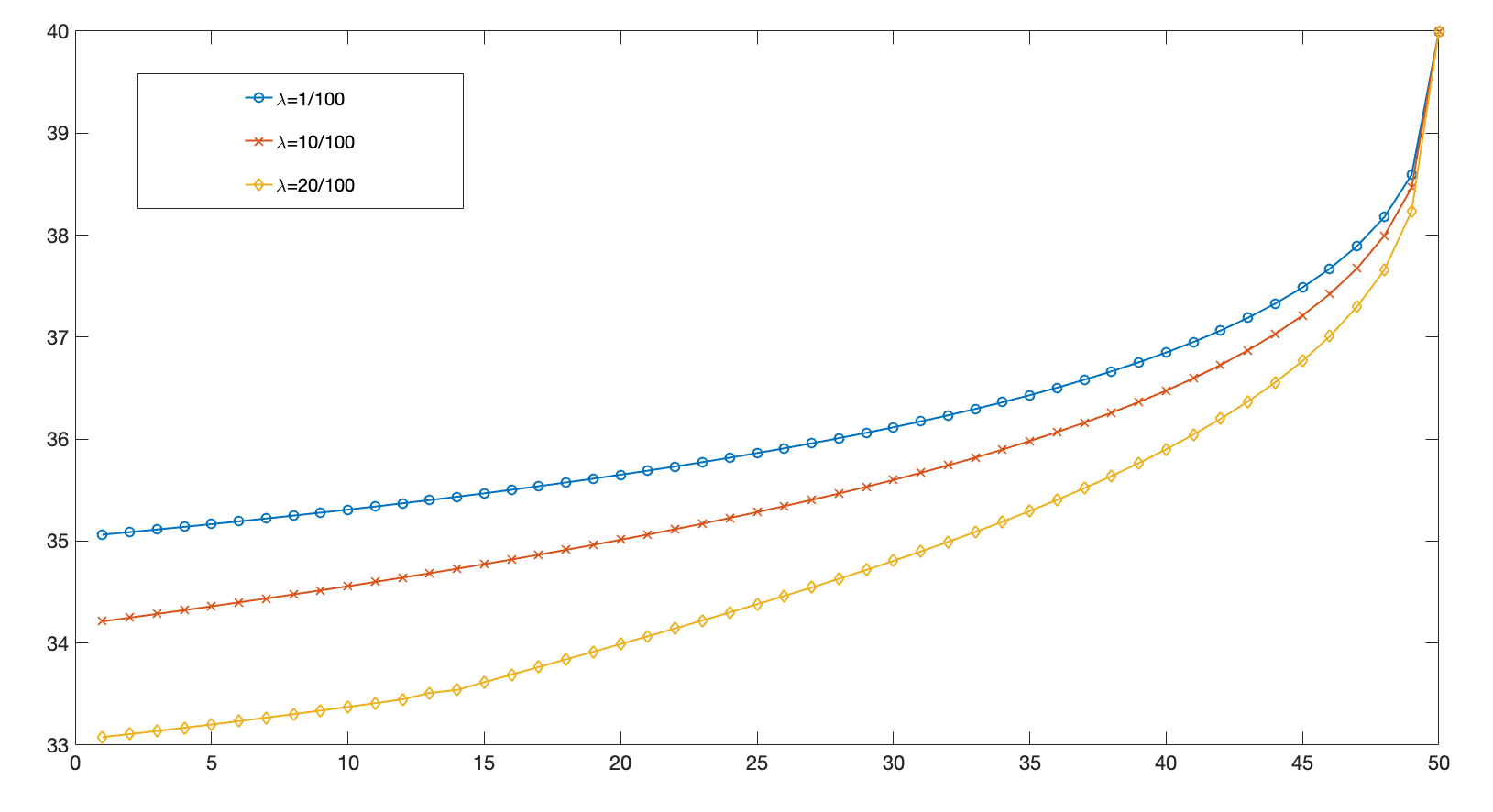}
\captionsetup{justification=raggedright,margin=0.45cm}
\caption*{\footnotesize{Note: The exercise boundary is approximated with 50 steps. $K=40$, $r=4.88/100$, $\delta=0$, $\sigma=0.2$, $p=0.04$, $q=0.96$, $\eta_{1}=3.7$, $\eta_{2}=1.8$, $S_{0}=40$, and $T=0.5$.}}
\label{kou}
\end{sidewaysfigure}


\begin{thebibliography}{10}

\bibitem{sahalia1996}Aït-Sahalia, Y. (1996): 
\textit{Testing continuous-time models of the spot interest rate,} Review of Financial Studies 9, 385–426.

\bibitem{sahalia1}Aït-Sahalia, Y. (1999): 
\textit{Transition densities for interest rate and other nonlinear diffusions,} The Journal of Finance 54, 1361–1395.

\bibitem{sahalia2}Aït-Sahalia, Y. (2002): 
\textit{Maximum-likelihood estimation of discretely-sampled diffusions: a closed-form approximation approach,} Econometrica 70, 223–262.

\bibitem{sahalia3}Aït-Sahalia, Y. (2008): 
\textit{Closed-form likelihood expansions for multivariate diffusions,} Annuals of Statistics. 36, 906–937.

\bibitem{sahalia&kimmel1}Aït-Sahalia, Y. and R. Kimmel (2007): 
\textit{Maximum likelihood estimation of stochastic volatility models,} Journal of Financial Economics. 83, 413–452.

\bibitem{sahalia&kimmel2}Aït-Sahalia, Y. and R. Kimmel (2010): 
\textit{Estimating affine multifactor term structure models using closed-form likelihood expansions,} Journal of Financial Economics, 98, 113–144.

\bibitem{amengual}Amengual, D. (2008):
\textit{The Term Structure of Variance Risk Premia,} Tech. Rep. Princeton University.

\bibitem{breen}Breen, R. (1991): 
\textit{The Accelerated Binomial Option Pricing Model,} Journal of Financial and Quantitative Analysis, 26, 153–164.

\bibitem{brodie&detemple}Broadie, M. and J.B. Detemple (2004): 
\textit{Option Pricing: Valuation Models and Applications,} Management Science 50(9):1145-1177

\bibitem{carr}Carr, P., R. Jarrow, and R. Myneni (1992):
\textit{Alternative characterizations of American put options,} Mathematical Finance 2 87–106.

\bibitem{cox&ross&rub}Cox, J.C., S.A. Ross and M.Rubinstein (1979): 
\textit{Option Pricing: A Simplified Approach,} Journal of Financial Economics, 7, 229–263.

\bibitem{detemple}Detemple, J. (2006):
\textit{American-style derivatives : valuation and computation,} Chapman \& Hall/CRC

\bibitem{detemple&tian}Detemple, J., and W. Tian (2002):
\textit{The valuation of American options for a class of diffusion processes,} Management Science. 48 917–937.

\bibitem{detemple}Detemple, J., and Y. Kitapbayev, (2018):
\textit{On American VIX options under the generalized 3/2 and 1/2 models,} Mathematical Finance 28 (2), 550-581

\bibitem{egloff}Egloff, D., M. Leippold, and L. Wu (2010):
\textit{The term structure of variance swap rates and optimal variance swap investments,} Journal of Finance and Quantitative Analysis, 45, 1279–1310.

\bibitem{egorov&li&xu}Egorov, A.V., H. Li, and Y. Xu (2003):
\textit{Maximum likelihood estimation of time inhomogeneous diffusions,} Journal of Econometrics, 114, 107–139.

\bibitem{eraker&wang}Eraker, B., and J. Wang (2012): 
\textit{A Non-Linear Dynamic Model of the Variance Risk Premium,} Tech. Rep.. University of Wisconsin-Madison.

\bibitem{gallant&tauchen}Gallant, A., and G. Tauchen (1998): 
\textit{Reprojecting partially observed systems with an application to interest rate diffusions,} Journal of the American Statistical Association, 93, 10–24.

\bibitem{g&j}Geske, R., and H. Johnson (1984)
\textit{The American put option valued analytically,} The Journal of Finance, 39 1511–1524.

\bibitem{gukhal}Gukhal, C.R. (2001):
\textit{Analytical valuation of American options on jump-diffusion processes,} Mathematical Finance 11(1) 97–115.

\bibitem{huang&sub&yu}Huang, J., M. Subrahmanyam, and G. Yu (1996): 
\textit{Pricing and Hedging American Options: A Recursive Integration Method,} Review of Financial Studies, 9, 277–330.

\bibitem{jacka}Jacka, S.D. (1991):
\textit{Optimal stopping and the American put,} Mathematical Finance 1 1–14.

\bibitem{jacka&lynn}Jacka, S.D., and J.R. Lynn (1992): 
\textit{Finite-horizon optimal stopping, obstacle problems and the shape of the continuation region,} Stochastics and Stochastic Reports 39, 25–42.

\bibitem{ju}Ju, N. (1998): 
\textit{Pricing an American Option by Approximating Its Early Exercise Boundary as a Multi-Piece Exponential Function,} Review of Financial Studies, 11, 627–646.

\bibitem{kim}Kim, I.J. (1990):
\textit{The analytic valuation of American options,} Review of Financial Studies, 3 547–572.

\bibitem{kou}Kou, S.G. (2002):
\textit{A jump-diffusion model for option pricing,} Management Science, 48 1086–1101.

\bibitem{mencia&sentana}Mencia, J., and E. Sentana (2009):
\textit{Valuation of VIX Derivatives,} Tech. Rep. 0913. CEMFI.

\bibitem{merton}Merton, R.C. (1976):
\textit{Option pricing when underlying stock returns are discontinuous,} Journal of Financial Economics, 3 125–144.

\bibitem{myneni}Myneni, R. (1992): 
\textit{The Pricing of the American Option,} Annals of Applied Probability, 2, 1–23.

\bibitem{rutkowski}Rutkowski, M. (1994) 
\textit{The early exercise premium representation of foreign market American options,} Mathematical Finance 4 313–325.

\bibitem{xiu}Xiu, D. (2014): 
\textit{Hermite polynomial based expansion of European option prices,} Journal of Econometrics, 179, 158-177

\bibitem{yu}Yu, J. (2007):
\textit{Closed-form likelihood approximation and estimation of jump-diffusions with an application to the realignment risk of the Chinese yuan,} Journal of Econometrics, 141, 1245–1280.


\end{thebibliography}
\end{document}